\newtheorem{thm}{Theorem}
\newtheorem{lem}{Lemma}
\newtheorem{prop}{Proposition}
\newtheorem{cor}{Corollary}
\theoremstyle{definition}
\newtheorem{defn}{Definition}
\begin{document}
%
\title{Cognitive Hierarchy Theory for Distributed Resource Allocation in the Internet of Things}

\author[D. Kopta et al.]
{Nof Abuzainab$^1$, Walid Saad$^1$, Choong Seon Hong$^2$, and H. Vincent Poor $^3$
	\\ \vspace{-0.3 cm}
	\small$^1$Wireless@VT, Department of Electrical and Computer Engineering, Virginia Tech, Blacksburg, VA, USA,\\ \vspace{-0.2 cm}Emails:\{nof, walids\}@vt.edu\\ \vspace{-0.2 cm}
	$^2$Department of Computer Science and Engineering, Kyung Hee University, Yongin, South Korea, \\ \vspace{-0.2 cm} Email: cshong@khu.ac.kr\\ \vspace{-0.2 cm}
	$^3$Department of Electrical Engineering, Princeton  University, Princeton, NJ, USA, Email: poor@princeton.edu\\
	\vspace{-2.5ex}}

\author{Nof~Abuzainab,~
        Walid~Saad,~\IEEEmembership{Senior~Member,~IEEE,}
        Choong~Seong~Hong,~\IEEEmembership{Senior~Member,~IEEE,}
        ~and~H. ~Vincent~Poor,~\IEEEmembership{Life~Fellow,~IEEE} \vspace{-2.5ex}
\thanks{N. Abuzainab and W. Saad are with the department of Electrical and Computer Engineering, Virginia Tech, Blacksburg, VA, USA, e-mail: \{nof, walids\}@vt.edu}
\thanks{C.S. Hong is with department of Computer Science and Engineering, Kyung Hee University, Yongin, South Korea, email:  cshong@khu.ac.kr }
\thanks{H. V. Poor is with the department of Electrical Engineering, Princeton  University, Princeton, NJ, USA, Email: poor@princeton.edu}}
\normalsize
\maketitle


%
\IEEEpeerreviewmaketitle
\vspace{-0.5 cm}
\begin{abstract}
	In this paper, the problem of distributed resource allocation is studied for an Internet of Things (IoT) system, composed of a heterogeneous group of nodes compromising both machine-type devices (MTDs) and human-type devices (HTDs). The problem is formulated as a noncooperative game between the heterogeneous IoT devices that seek to find the optimal time allocation so as to meet their quality-of-service (QoS) requirements in terms of energy, rate and latency. Since the strategy space of each device is dependent on the actions of the other devices, the generalized Nash equilibrium (GNE) solution is first characterized, and the conditions for uniqueness of the GNE are derived. Then, to explicitly capture the heterogeneity of the devices, in terms of resource constraints and QoS needs, a novel and more realistic game-theoretic approach, based on the behavioral framework of cognitive hierarchy (CH) theory, is proposed. This approach is then shown to enable the IoT devices to reach a CH equilibrium (CHE) concept that takes into account the various levels of rationality corresponding to the heterogeneous computational capabilities and the information accessible for each one of the MTDs and HTDs.
Simulation results show that the CHE solution maintains stable performance. In particular, the proposed CHE solution keeps the percentage of devices with satisfied QoS constraints above $96\%$ for IoT networks containing up to 10,000 devices without considerably degrading the overall system performance in terms of the total utility. Simulation results also show that the proposed CHE solution brings a two-fold increase in the total rate of HTDs and deceases the total energy consumed by MTDs by $78\%$ compared to the equal time policy. 
\end{abstract}

\let\thefootnote\relax\footnote{A preliminary version of this work was presented in \cite{isit2016} at the 2016 IEEE International Symposium on Information Theory.
	
This research was supported by the U.S. Office of Naval Research (ONR) under Grant N00014-15-1-2709 and by the U.S. National Science Foundation under Grants CNS-1460333, CNS-1702808, ECCS-1647198 and OAC-1541105.}


%
\vspace{-1 cm}\section{Introduction}
\IEEEPARstart{M}{eeting} the stringent quality-of-service (QoS) requirements of a massive number of heterogeneous devices is the main challenge facing the successful deployment of the Internet of Things (IoT)\cite{isit2016,iott, niyato, m2msurvey}. In particular, the IoT ecosystem will encompass both human type devices (HTDs) and machine type devices (MTDs). MTDs are expected to deliver a wide range of applications ranging from healthcare to smart homes and transportation and, as such, they will exhibit a heterogeneous mix of quality-of-service (QoS) requirements \cite{m2msurvey}. In general, the three main important performance metrics for MTDs are reliability, energy efficiency, and latency. MTDs, such as those used for environmental monitoring, cannot be easily charged, and, thus they must minimize their energy consumption. In contrast, MTDs that are used for critical applications such as alarm systems will be primarily seeking to reliably deliver their packets under stringent delay constraints. 
HTDs, such as smartphones, typically require high-speed transmission rates while not exceeding a certain energy budget. Beyond this heterogeneous nature of the IoT, its massive scale will significantly increase the competition on the wireless resources. This, in turn, requires designing very efficient resource allocation schemes tailored to the scale and heterogeneity of the system. Moreover, in the IoT, a centralized approach to resource allocation can be prohibitive as it requires solving an optimization problem with a large number of variables and constraints corresponding to all of the IoT devices. Thus, there is a need to adopt distributed and self-organizing IoT resource allocation schemes \cite{iotson}. In addition to being distributed, resource allocation in the IoT must also explicitly factor in the various IoT devices constraints that stem from their different QoS needs and computational capabilities in terms of memory and processing powers. 
 
To address these challenges, there is a need for a joint design of resource allocation and multiple access for the IoT \cite{popovski,tai,iotn1,iotn2,iotn4,iotn3, iot4,iot2,iotn5, iot3, ram1}. In \cite{ram1}, random access was initially proposed as a suitable multiple access scheme for a system with massive number of devices such as the IoT since it does not require coordination and can properly cope with the bursty nature of the MTDs traffic. However, in a dense IoT system, random access can potentially lead to increased collisions, and thus, not all devices will be able to meet their QoS requirements \cite{ramcollisions}. For example, the performance of MTDs that require ultra low latency or HTDs that require high data rates can can be severely affected by collisions. Thus, there is a need to design a new, distributed IoT multiple access scheme that can satisfy the requirements of devices with strict QoS constraints.

\vspace{-0.4 cm}
\subsection{Related Works}
There has been significant recent interest in developing resource allocation mechanisms suitable for the IoT such as in \cite{iotn1,iotn2,iotn3,iotn4,iotn5, iot2,iot4,iot3, ram1}. Centralized scheduling schemes for  IoT LTE networks are proposed in \cite{iotn2, iotn4,iotn3,iotn1}. In \cite{iotn1}, a resource management scheme that dynamically allocates time resources between MTDs and HTDs based on current traffic conditions and QoS requirements. The works in \cite{iotn2} and \cite{iotn4} propose schemes that allocate the LTE resources to MTDs and HTDs based on a bipartite graph. In \cite{iotn3}, the authors propose two seperate uplink scheduling schemes for HTDs and MTDs in an LTE system based on channel conditions and delay requirements while taking fairness into account. Other works such as in \cite{iot4} and \cite{iot2} adopted game-theoretic approaches for distributed resource allocation problems in the IoT. The authors in \cite{iot4} study the problem of throughput maximization of MTDs under random access.  However, in \cite{iot4}, devices are considered of equal capability and similar QoS requirements. The work in \cite{iot2} considers a heterogeneous system of MTDs in which nodes can use different routing and network coding schemes to optimize heterogeneous QoS requirements. An evolutionary game approach is proposed in \cite{iotn5} for optimizing the transmission strategy in a device-to-device (D2D) enabled LTE network in which each MTD chooses to act noncooperatively or cooperatively with other MTDs using the same resources.
In \cite{iotn6}, the problem of uplink user association of IoT devices is studied in a dense small cell network using a mean-field game. 

However, these works \cite{iotn1,iotn2,iotn3,iotn4,iotn5, iot2,iot4,iot3, ram1} typically rely on the concept of a Nash equilibrium to solve the studied resource allocation problems. Such a conventional Nash equilibrium solution may not be suitable to solve the distributed resource allocation when the optimization is done at  the IoT devices' end. This is because the Nash equilibrium assumes that players have equal and similar capabilities, while IoT devices are heterogeneous and have different computational capabilities. Thus, the rationality of each IoT device is bounded by its computational capabilty.  Also, at the Nash equilibrium, in order to compute its best resource strategy, each device must know the true actions of other devices in the system, which is not practical for the IoT. Indeed, in practice, to gather information on all the actions of the opposing devices, a significant amount of information exchange (and, thus, delay) is required given the massive number of devices in the IoT. Moreover, MTDs, such as small sensors, have limited memory and are unable to store the actions of all other IoT devices. Further, mean-field game solutions such as in In \cite{iotn6} typically rely on the exchangeability property which presumes that all devices are homogeneous and have a similar impact on the game. This assumption is generally not appropriate for heterogeneous IoT environments. These limitations motivate the need for new solution approaches that are tailored to the unique nature of the IoT, in terms of heterogeneity, scale, and device constraints.

\vspace{-0.4 cm}
\subsection{Contribution}
The main contributions of this paper are summarized as follows:
\begin{itemize}

 \item We propose a distributed self-organizing resource allocation scheme that enables the IoT devices to find their optimal allocation of time resources, while explicitly catering for the individual capability of each device and the limited availability of information. In particular, we consider the problem of resource allocation in the uplink of an IoT network that is supporting a heterogeneous mix of IoT devices using a time division multiple access (TDMA) scheme. In this network, each IoT device is self interested in determining the optimal time fraction that meets its own strict QoS guarantees in a distributed manner. In particular, HTDs seek to maximize their data rate while MTDs must deliver their packets within stringent deadlines. For both HTDs and MTDs, the proposed model accounts for energy efficiency.
\item Due to the dependence of the optimal time fraction of each device on the chosen time fractions by the remaining devices,
we formulate the problem as a noncooperative game$^1$\footnote[1]{$^1$ Here, we note that, within the scope of this work, devices are assumed to be noncooperative, due to the overhead associated with cooperation in a massive IoT.} in which the IoT devices are the players that seek to guarantee their QoS while explicitly factoring in their heterogeneous requirements and devices' capabilities.
We characterize the generalized Nash equilibrium (GNE) solution of the proposed game and show the conditions under which the GNE is unique. Further, we present a learning algorithm that allows the IoT devices to reach the GNE in a distributed manner. We show that the computational complexity of the GNE learning algorithm is polynomial in the number of IoT devices. As such, given the massive scale of the IoT, it may not be feasible for IoT devices with limited computation capabilities to find the GNE.
\item To address these computational limitations, we propose a novel and more realistic solution that extends the framework of \emph{cognitive hierarchy (CH) theory} \cite{Camerer_acognitive}, a branch of behavioral game theory that assumes that players belong to different discrete levels of rationality, and that the players have different beliefs about the remaining players depending on their rationality level. The rationale behind using a CH approach is two-fold: a) The different CH rationality levels allow us to account for the heterogeneous computational capabilities of the IoT devices, while the different players' beliefs are dependent on the resources available at each IoT devices to obtain and store the information about the remaining players. Thus, the CH approach is more realistic model than the classical GNE that assumes that all players are fully rational and of equal capabilities and b) In the proposed CH approach, each device selects its equilibrium strategy based on its beliefs about the remaining devices, and, thus, it does not incur massive information exchange to find the CH equilibrium as is the case of finding the GNE solution.
These characteristics make the CH framework suitable for this problem. To model the IoT resource allocation problem using CH, we extend the conventional framework to explicitly take into account the awareness of each device of other devices at the same rationality level. This extension is important in our problem as multiple devices of the same characteristics and computational capabilities can exist in an IoT network. 

\item We show that the computational complexity of the optimization done at each device to find the cognitive hierarchy equilibrium (CHE) is much smaller than the GNE and is linear in the number of CH levels.
\item Simulation results show that the CHE solution maintains stable performance. In particular, the proposed CHE solution keeps the percentage of devices with satisfied QoS constraints above $96\%$ for IoT networks with up to 10,000 devices while not conisderably degrading the overall system performance, in terms of the total utility. Simulation results also show that the proposed CHE solution brings a two-fold increase in the total rate of HTDs and decreases the total energy by MTDs by $78\%$ compared to an equal time policy. 
\end{itemize}
  
The rest of the paper is organized as follows. Section II presents the system model and the heterogeneous time allocation (HTA) problem. Section III describes the noncooperative game formulation of the HTA problem and presents the GNE and the CHE solutions respectively. Simulation results are presented in Section IV. Finally, conclusions are drawn in Section V.
\vspace{-0.3 cm}
\section{System Model}

Consider the uplink of an IoT system composed of a heterogeneous mix of machine type devices and human type devices. In this model, we consider MTDs having strict delay requirements and HTDs having high data rate requirements that are served by a base station (BS) according to a TDMA scheme. In this scheme, transmissions occur in time periods of $T$ seconds, and each IoT device transmits during a fraction $\tau_i$ of $T$. We denote by $\mathcal{L}$ the set of $L$ IoT devices  $(\mathcal{L} = \mathcal{H} \cup \mathcal{M}$) that includes the set $\mathcal{H}$ of HTDs and the set $\mathcal{M}$ of MTDs. All IoT devices in $\mathcal{L}$ transmit on the same frequency band of bandwidth $W$. Each device $i$ transmits with a fixed power $P_i$. The channel gains, $h_i$, between any device $i \in \mathcal{L}$ and the BS are assumed to be independent block Rayleigh fading with variance $\alpha_i^2$. It is assumed that statistical channel state information (CSI) is available at each device i.e. each device knows the channel statistics but not the instantaneous channel gain. This assumption is suitable for uplink in IoT as obtaining statistical CSI requires less communication overhead than full CSI \cite{chanstate}. Additive white Gaussian noise of variance $\sigma^2$ is present at each receiver.


Each device $i \in \mathcal{L}$ must determine the optimal time fraction $\tau^*_i$ that meets its quality-of-service requirements. Due to the heterogeneity of devices, the QoS requirement of each device will depend on its type. The details of the QoS requirements  of HTDs and MTDs are presented respectively as follows.

\subsection{HTDs QoS Requirements}
HTDs, such as smartphones, will seek to maximize their expected transmission rate while not exceeding an energy budget $E_i$. The achieved rate is related to the received signal-to-noise ratio (SNR) through the ergodic capacity formula and is given by:\begin{equation}
R_i=\mathbb{E}_{h_i}\left[W\log\Big(1+\frac{|h_i|^2P_i}{\sigma^2}\Big)\right]\tau_i.
\end{equation}

The energy spent by HTD $i$ during time fraction $\tau_i$ is:
\vspace{-0.3 cm}
\begin{equation}
\mathbb{E}[\xi_i]=P_i \tau_i  T.
\vspace{-0.1 cm}
\end{equation}
Thus, each HTD $i$ solves the following optimization problem
\begin{subequations}
	\begin{align}
&\max_{\tau_i} \mathbb{E}_{h_i}\left[W\log\Big(1+\frac{|h_i|^2P_i}{\sigma^2}\Big)\right]\tau_i,\\
& \textrm{s.t.}\hspace{0.3 cm} P_i \tau_i \leq \frac{E_i}{ T}, \hspace{0.2 cm} \tau_i \leq 1-\sum_{j \in \mathcal{L}}\tau_j, \hspace{0.2 cm} 0 \leq \tau_i \leq 1, \label{HTDopt}
	\end{align}
\end{subequations}
\vspace{-0.2 cm}
where $W$ is the bandwidth. 
\vspace{-0.2 cm}

\subsection{MTDs QoS Requirements}

Each MTD $i$ seeks to deliver a packet of size $b_i$ bits within a strict delay constraint. Due to fading, the packet may not be delivered successfully within a single transmission, and, thus, the probability $p_i$ with which the packet transmitted by device $i$ is successfully decoded at the BS is defined as the probability that the received SNR is greater than a required threshold $\gamma_i$. The threshold $\gamma_i$ is chosen so that $b_i$ bits can be transmitted successfully. Here, using Shannon's capacity formula, we have\vspace{-0.2 cm}\begin{equation}
\frac{b_i}{ T \tau_i}= W\log (1+\gamma_i),\label{thres}
\end{equation} where $W$ is the bandwidth. The probability of successful decoding is therefore
\begin{eqnarray}
p_i=\textrm{Pr}\bigg[\frac{|h_i|^2P_i}{\sigma^2}\geq \gamma_i\bigg]=e^{-\frac{\gamma_i \sigma^2}{\alpha_i^2 P_i}},
\end{eqnarray}
where $P_i$ is the transmit power of MTD $i$. MTD $i$ uses retransmissions in order to reliably deliver the packet. Hence, the probability distribution of the packet success time $T_i$ of device $d_i$ follows a geometric distribution:
\begin{eqnarray}
\textrm{Pr}[T_i=k]=(1-p_i)^{k-1}p_i=\Big(1-e^{-\frac{\gamma_i \sigma^2}{\alpha_i^2 P_i}}\Big)^{k-1}e^{-\frac{\gamma_i \sigma^2}{\alpha_i^2 P_i}}.
\end{eqnarray}
where $T_i$ is the number of time slots spent to deliver the packet successfully.

Each packet of MTD $i$ should be delivered with a strict deadline of $d_i$ seconds. It has been shown in \cite{m2mtraffic} that the MTDs' traffic is bursty. Thus, the inter-arrival times of the MTD packets are assumed to be large. Hence, the queuing delay can be ignored as it tends to be much smaller than the transmission delay. 
Indeed, for characterizing latency, in such a scenario, we allow each MTD to ensure that its time to transmit does not exceed a certain threshold. Since the time to deliver each packet successfully is random, the probability that the successful transmission time exceeds a certain number of time slots $t_i$ must be very small i.e. $\textrm{Pr}[T_i \geq t_i] \leq \epsilon$ with
\begin{eqnarray}
\textrm{Pr}[T_i \geq t_i]=(1-p_i)^{t_i}
= \Big(1-e^{-\frac{\gamma_i \sigma^2}{\alpha_i^2P_i}}\Big)^{t_i}.
\end{eqnarray}
and $t_i=\frac{d_i}{T}$. For MTDs, there is a need to maintain a low energy consumption to extend the overall lifetime of the devices. Thus, the expected energy consumed by each MTD $i$ will be\vspace{-0.2 cm}\begin{eqnarray}
\mathbb{E}[\mathcal{E}_i]=P_i \times \mathbb{E}[T_i] =\frac{P_i \tau_i T}{p_i}.
\label{energyi}
\end{eqnarray}
Thus, in order to find its optimal transmission parameters, each MTD $i$ must solve the following optimization problem
\begin{subequations}
	\begin{align}
	&\min_{\tau_i} \frac{P_i \tau_i T}{p_i},\\
	&\hspace{0.2 cm} \textrm{s.t.} \hspace{0.3 cm} \textrm{Pr}[T_i \geq t_i] \leq \epsilon, \hspace{0.2 cm} \tau_i \leq 1-\sum_{j \neq i} \tau_j, \hspace{0.2 cm} 0 \leq \tau_i \leq 1. \label{MTDopt1}
	\end{align}
\end{subequations}
where the constraint in (\ref{MTDopt1}) is based on the assumption of static allocation of time fractions i.e. each device $i$ is assigned the fraction $\tau_i$ even after it succeeds in transmitting its packet.
Based on (\ref{thres}), the time fraction $\tau_i$ is a decreasing function of $\gamma_i$. Also, the expression of $\textrm{Pr}[T_i \geq t_i]$ as a function of $\tau_i$ is complicated and the probability $p_i$ is a function of $\gamma_i$. Hence for convenience, the optimization variable is changed from $\tau_i$ to $\gamma_i$ yielding
\begin{subequations}
\begin{align}
&\min_{\gamma_i}\frac{ P_i b_i}{W\log(1+\gamma_i)e^{-\frac{\gamma_i \sigma^2}{\alpha_i^2P_i}}},\\
& \textrm{s.t.} \hspace{0.1 cm} \Big(1-e^{-\frac{\gamma_i \sigma^2}{\alpha_i^2 P_i}}\Big)^{t_i} \leq \epsilon, \hspace{0.3 cm} \frac{b_i}{T} \leq W\log (1+\gamma_i), \\
&\frac{b_i}{T \cdot W\log(1+\gamma_i)}\leq 1-\sum_{j \neq i}\frac{b_j}{T \cdot W\log(1+\gamma_j)}.
\end{align}\label{MTDopt2}
\end{subequations}

By rewriting the constraints in (\ref{MTDopt2}), the optimization becomes
\begin{subequations}
\begin{align}
	&\hspace{-3 cm}\min_{\gamma_i}\frac{ P_i b_i}{W\log(1+\gamma_i)e^{-\frac{\gamma_i \sigma^2}{\alpha_i^2P_i}}},\\ \vspace{0.2 cm}
	& \hspace{-3 cm}\textrm{s.t.} \hspace{0.2 cm} \gamma_i \leq -\frac{\alpha_i^2 P_i\log(1-\sqrt[t_i]{\epsilon})}{\sigma^2},\label{gammaub}\\
	& \hspace{-2.3 cm}\gamma_i \geq e^{\frac{b_i}{T\cdot W (1-\sum_{j}\frac{b_j}{T\cdot W\log(1+\gamma_j)})}}.\label{mtdopt2}
\end{align}
\end{subequations}

Given the scale and heterogeneity of the IoT, a centralized approach to find the optimal transmission probabilities will be challenging to implement in practice. This is due to the fact that it will require solving an optimization problem with very large number of variables and constraints. Thus, a centralized solution will always lead to high latency which can be significantly prohibitive for IoT services that are extremely sensitive to latency. Consequently, a self-organization approach to resource allocation is essential in the IoT as it allows the IoT devices to optimize their optimal transmission probabilities in a distributed manner. Based on optimization problems (\ref{HTDopt}) and (\ref{MTDopt1}), the optimal time fraction for each device $i$ is clearly dependent on the time fractions of the remaining devices, which motivates the use of a game-theoretic approach \cite{gametheory}, as detailed next.

\section{Heterogeneous time allocation game}

We formulate a static continuous noncooperative game defined by $(\mathcal{L}, (\mathcal{S}_i)_{i \in \mathcal{L}}, (U_i)_{i \in \mathcal{L}})$ with the players are being the devices in $\mathcal{L}$. The action $a_i$ of each HTD $i$ is to choose the time fraction $\tau_i$, and the action of each MTD $i$ is to find the required threshold $\gamma_i$ (which corresponds to finding its time fraction $\tau_i$ according to (\ref{thres})). Thus, given the constraints in (\ref{HTDopt})  and (\ref{mtdopt2}), respectively, the strategy space for each MTD $i$ is $\mathcal{S}_i=\Big[e^{\frac{b_i}{T\cdot W (1-\sum_{j \in \mathcal{M}, j \neq i}\frac{b_j}{T\cdot W\log(1+\gamma_j)})-\sum_{j \in \mathcal{H}\tau_j}}},\infty\Big]$ and for each HTD $i$ is $\mathcal{S}_i=[0, \min\{\frac{E}{ T P_i},1-\sum_{j \in \mathcal{H}, j \neq i}\tau_j-\sum_{j \in \mathcal{M}}\frac{b_j}{T\cdot W\log(1+\gamma_j)}\}]$. Consequently, the utility function of each MTD $i$ is
\begin{equation}
U_i(a_i)=-\frac{ P_i b_i}{ W\log(1+\gamma_i)e^{-\frac{\gamma_i \sigma^2}{\alpha_i^2P_i}}}.\label{utilmtd}
\end{equation}

The utility function of each HTD $i$ is
\begin{equation}U_i(a_i)=\mathbb{E}_{h_i}\Big[W\log(1+\frac{|h_i|^2P_i}{\sigma^2})\Big]  \tau_i.
\label{utilhtd}\end{equation} We can clearly see that the utility of each HTD $i$ in (\ref{utilhtd}) is increasing in $\tau_i$. The concavity of the utility each MTD $i$ is proved next.
\begin{lem}
	\emph{The utility of MTD $i$ in (\ref{utilmtd}) is concave in $\gamma_i$.}
\end{lem}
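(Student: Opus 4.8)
The plan is to reduce the concavity of $U_i$ to the convexity of a simpler auxiliary function, and then establish that convexity through a log-convexity argument, which sidesteps an unwieldy direct computation of the second derivative $U_i''$.

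First, I would pull out the positive constant factor and isolate the $\gamma_i$-dependence. Writing $c_i = \frac{\sigma^2}{\alpha_i^2 P_i} > 0$ and $K_i = \frac{P_i b_i}{W} > 0$, the utility in (\ref{utilmtd}) becomes
\begin{equation}
U_i(\gamma_i) = -K_i \, f(\gamma_i), \qquad f(\gamma_i) = \frac{e^{c_i \gamma_i}}{\log(1+\gamma_i)}.
\end{equation}
Since $K_i > 0$, showing that $U_i$ is concave in $\gamma_i$ is equivalent to showing that $f$ is convex on the relevant domain. I would first note that every feasible $\gamma_i$ in the strategy space $\mathcal{S}_i$ has its lower bound of the form $e^{(\,\cdot\,)}$ with a positive exponent, so $\gamma_i > 1 > 0$; hence $\log(1+\gamma_i) > 0$ and $f$ is positive and twice differentiable throughout, bounded away from the singularity at $\gamma_i = 0$.

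Second, rather than differentiating $f$ directly, I would prove the stronger statement that $f$ is \emph{log-convex}, which implies convexity because $f = \exp(\log f)$ and the exponential is convex and nondecreasing. Setting $h(\gamma_i) = \log f(\gamma_i) = c_i \gamma_i - \log\bigl(\log(1+\gamma_i)\bigr)$, a first differentiation gives
\begin{equation}
h'(\gamma_i) = c_i - \frac{1}{(1+\gamma_i)\log(1+\gamma_i)}.
\end{equation}
Writing $u(\gamma_i) = (1+\gamma_i)\log(1+\gamma_i)$, which is positive for $\gamma_i > 0$, a second differentiation yields
\begin{equation}
h''(\gamma_i) = \frac{u'(\gamma_i)}{u(\gamma_i)^2}, \qquad u'(\gamma_i) = 1 + \log(1+\gamma_i).
\end{equation}
Because $u'(\gamma_i) > 0$ and $u(\gamma_i)^2 > 0$ for all $\gamma_i > 0$, we obtain $h'' > 0$, so $h = \log f$ is strictly convex, hence $f$ is convex and $U_i$ is concave.

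I expect the only real subtlety — and the reason for taking the log-convexity route — to be that a brute-force evaluation of $U_i''$ produces several competing terms whose net sign is not obviously negative. The logarithmic transform cleanly separates the exponential part, which becomes the linear term $c_i\gamma_i$ and drops out of $h''$, from the logarithmic denominator, reducing the entire question to the elementary monotonicity of $u(\gamma_i) = (1+\gamma_i)\log(1+\gamma_i)$ on $\gamma_i > 0$. The one point I would make sure to state explicitly at the outset is that feasibility forces $\gamma_i$ to stay strictly above the singularity, so that all the manipulations above are valid on the actual strategy space.
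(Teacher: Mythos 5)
Your proof is correct and follows essentially the same route as the paper's: both take the logarithm of the cost function $\chi_i(\gamma_i)$, split it into a linear term plus $-\log\log(1+\gamma_i)$, establish log-convexity, and conclude convexity of $\chi_i$ and hence concavity of $U_i = -\chi_i$. The only difference is that you explicitly verify the convexity of $-\log\log(1+\gamma_i)$ by computing $h'' = \frac{1+\log(1+\gamma_i)}{\left[(1+\gamma_i)\log(1+\gamma_i)\right]^2} > 0$ and carefully note that feasibility keeps $\gamma_i$ away from the singularity at $0$, two details the paper simply asserts or omits.
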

\begin{proof}
	By applying the $\log$ function to the energy function $\chi_i(\gamma_i)= \frac{ P_i b_i}{ W\log(1+\gamma_i)e^{-\frac{\gamma_i \sigma^2}{\alpha_i^2P_i}}}$ we have
	$\log \chi_i(\gamma_i)=\log P_ib_i-\log (W)-\log\log(1+\gamma_i)+ \frac{\gamma_i \sigma^2}{\alpha_i^2P_i}$. The term $-\log\log(1+\gamma_i)$ is convex in $\gamma_i$ since $\log \log(1+\gamma_i)$ is concave in $\gamma_i$. Also, the term $\frac{\gamma_i \sigma^2}{\alpha_i^2P_i}$ is linear in $\gamma_i$. Hence, the function $\log \chi_i(\gamma_i)$ is convex in $\gamma_i$. It follows that the energy function $\chi_i(\gamma_i)$ is convex in $\gamma_i$ as it logconvex in $\gamma_i$. Then, the utility of MTD $i$ is concave since $U_i(\gamma_i)=-\chi_i(\gamma_i)$.
\end{proof}
\begin{lem}\emph{By equating the derivative of  $U_i(\gamma_i)$ to zero, the utility function of MTD $i$ attains its maximum at $\gamma'_i$ that satisfies}
\begin{equation}
 \frac{1}{1+\gamma'_i}=\frac{\sigma^2}{\alpha_i^2P_i}\log(1+\gamma'_i).\label{mtdmax}
	\end{equation}
\end{lem}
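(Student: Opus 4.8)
The plan is to build directly on the computation already carried out in the proof of Lemma~1, where the logarithm of the energy function was written as
\[
\log \chi_i(\gamma_i)=\log(P_ib_i)-\log W-\log\log(1+\gamma_i)+ \frac{\gamma_i \sigma^2}{\alpha_i^2P_i}
\]
and shown to be convex in $\gamma_i$. Since $U_i(\gamma_i)=-\chi_i(\gamma_i)$ with $\chi_i>0$, maximizing $U_i$ is equivalent to minimizing $\chi_i$, and because $t\mapsto\log t$ is strictly increasing on $(0,\infty)$ this is in turn equivalent to minimizing $\log\chi_i$. The convexity from Lemma~1 guarantees that any stationary point of $\log\chi_i$ is its global minimizer, hence the corresponding $\gamma'_i$ is the global maximizer of $U_i$. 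This reduces the claim to locating that stationary point, and working with $\log\chi_i$ instead of $\chi_i$ is the key simplification: as $\log$ is monotone and $\chi_i>0$, the two functions share exactly the same critical points, so nothing is lost.

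Next I would differentiate $\log\chi_i$ term by term. The constants drop out, the linear term contributes $\frac{\sigma^2}{\alpha_i^2P_i}$, and the only nontrivial piece is $-\log\log(1+\gamma_i)$, whose derivative by the chain rule is $-\frac{1}{(1+\gamma_i)\log(1+\gamma_i)}$. Setting the total derivative to zero gives
\[
\frac{1}{(1+\gamma'_i)\log(1+\gamma'_i)}=\frac{\sigma^2}{\alpha_i^2P_i},
\]
and multiplying through by $\log(1+\gamma'_i)$ rearranges immediately to the asserted identity~(\ref{mtdmax}).

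Finally I would append a brief well-posedness remark so the statement ``attains its maximum at $\gamma'_i$'' is unambiguous. Evaluating the two sides of~(\ref{mtdmax}) shows that the left-hand side $\frac{1}{1+\gamma_i}$ decreases from $1$ to $0$ as $\gamma_i$ ranges over $[0,\infty)$, whereas the right-hand side $\frac{\sigma^2}{\alpha_i^2P_i}\log(1+\gamma_i)$ increases from $0$ to $\infty$; the curves therefore intersect at exactly one point, so $\gamma'_i$ is uniquely defined. The main (and essentially only) obstacle is keeping the chain-rule differentiation of the doubly nested logarithm correct and tracking the reciprocal factors cleanly through the rearrangement; concavity having already been secured in Lemma~1, the optimality of the resulting $\gamma'_i$ requires no further argument.
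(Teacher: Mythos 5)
Your proposal is correct and follows essentially the same route as the paper, which simply asserts the first-order condition obtained by setting the derivative of $U_i$ (equivalently, of $\log\chi_i$, since $\chi_i>0$ and the two share critical points) to zero; your chain-rule computation reproduces exactly the stationarity equation~(\ref{mtdmax}). The additions you make---invoking the convexity from Lemma~1 to certify that the stationary point is the global optimizer, and the monotonicity argument showing the equation has a unique root---are sound refinements of, not departures from, the paper's argument.
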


In the proposed heterogeneous time allocation (HTA) game, the utility of each device $i$ based on (\ref{utilmtd}) and (\ref{utilhtd}) is dependent only on its own strategy. However, \emph{the strategy space of each device $i$ is dependent on the strategy vector $\boldsymbol{a}_{-i}$  of the remaining devices}. Hence, to find a suitable solution for this formulated game, we must study the concept of the generalized Nash equilibrium (GNE) \cite{gnep}.
\vspace{-0.2 cm}

\subsection{Generalized Nash Equilibrium Solution}

The GNE is a popular solution concept used to solve game-theoretic scenarios in which the action spaces of the players are mutually dependent, as is the case in the formulated HTA game. Formally, the GNE for the proposed game can be defined as follows.
\begin{defn}
	The \emph{GNE} of the heterogeneous time allocation game is the vector of players actions $\boldsymbol{a^*}$ such that $U_i(a^*_i) \geq U_i(a_i) \hspace{0.4 cm} \forall i \in \mathcal{D}, a_i \in \mathcal{S}_i(\boldsymbol{a}^*_{-i}).$
\end{defn}

In what follows, we show that the GNE for the HTA game always exists. Subsequently, we show the conditions under which the GNE is unique and characterize the GNE set in the case when the GNE is not unique.

\begin{prop}
\emph{A GNE exists for the heterogeneous time allocation game.}
\end{prop}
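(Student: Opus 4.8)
The plan is to establish existence by the standard fixed-point route for generalized Nash equilibria, specializing the concave-game machinery of Rosen (equivalently, the Arrow--Debreu/Debreu social-equilibrium theorem) to the HTA game. Two structural facts, already available in the excerpt, drive the argument: each HTD utility in (\ref{utilhtd}) is linear, hence concave, in its own action $\tau_i$, and each MTD utility in (\ref{utilmtd}) is concave in its own action $\gamma_i$ by Lemma~1; moreover all interaction among the players enters \emph{only} through a single shared time-budget constraint. My intention is to verify the hypotheses of the existence theorem on a suitably compactified action space and then invoke the fixed-point result.

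First I would replace the unbounded MTD strategy set $[\,\cdot\,,\infty]$ by a compact interval without discarding any equilibrium. By Lemma~2 the MTD utility is strictly increasing up to the interior maximizer $\gamma_i'$ and strictly decreasing beyond it, and the private latency constraint (\ref{gammaub}) supplies an upper cap $\overline{\gamma}_i$; hence no MTD ever benefits from a $\gamma_i$ exceeding $\max\{\gamma_i',\overline{\gamma}_i\}$, so I may truncate each MTD set to a closed bounded interval whose lower endpoint is uniformly bounded below (the coupled lower bound in $\mathcal{S}_i$ is at least $e^{b_i/(TW)}$). Together with the HTD sets, which are already closed bounded intervals, every action set then lies inside a fixed compact convex box $\mathcal{S}\subset\mathbb{R}^{L}$.

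Next I would check the three ingredients of the theorem. Continuity of each $U_i$ in the full profile is immediate from the closed forms, the relevant denominators being bounded away from zero on $\mathcal{S}$; own-action concavity holds as noted above. For the coupling I would put every device's time fraction in common units via $\tau_i=b_i/\big(TW\log(1+\gamma_i)\big)$ for MTDs and use the fact that $\gamma\mapsto 1/\log(1+\gamma)$ is convex on $(0,\infty)$, so that the shared budget
\begin{equation}
\sum_{i\in\mathcal{H}}\tau_i+\sum_{i\in\mathcal{M}}\frac{b_i}{TW\log(1+\gamma_i)}\le 1
\end{equation}
carves out a closed \emph{convex} feasible region $X\subset\mathcal{S}$. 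Consequently each constraint correspondence $\mathcal{S}_i(\boldsymbol{a}_{-i})$ is nonempty-, convex-, and compact-valued, and its endpoints vary continuously with $\boldsymbol{a}_{-i}$, giving both upper and lower hemicontinuity with closed graph. Because the HTA game is thereby a concave game with a jointly convex shared constraint, I would apply Rosen's existence theorem (equivalently, Kakutani's fixed-point theorem to the joint best-response correspondence, which is nonempty-valued, convex-valued, and upper hemicontinuous by the preceding steps) to obtain a profile $\boldsymbol{a}^*$ with $U_i(a_i^*)\ge U_i(a_i)$ for all $a_i\in\mathcal{S}_i(\boldsymbol{a}^*_{-i})$, i.e., a GNE.

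The routine parts are continuity and own-action concavity. The delicate points, and where I expect the real work to lie, are twofold: (i) confirming \emph{joint} convexity of the coupling set, which rests entirely on the convexity of $1/\log(1+\gamma)$ and is precisely what permits the strong (Rosen) form rather than only the abstract (Debreu) form; and (ii) the hemicontinuity---especially lower hemicontinuity---of the constraint correspondences together with nonemptiness of $X$, which implicitly demands that the aggregate minimal time required by the MTDs not exceed the budget $1$. I would either impose this mild feasibility condition explicitly or derive it from the per-device latency bounds in (\ref{gammaub}).
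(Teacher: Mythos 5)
Your proof is correct and takes essentially the same route as the paper, which likewise verifies own-action concavity (HTDs' linear utilities, MTDs' concavity via Lemma 1) and nonemptiness, closedness, and convexity of the coupled strategy sets, and then invokes the standard fixed-point existence theorem for generalized Nash games (Theorem 4.1 of the Facchinei--Kanzow survey) in place of your Rosen/Kakutani formulation. The extra steps you supply --- compactifying the unbounded MTD strategy sets via Lemma 2 and the latency cap $\gamma_{i,\textrm{UB}}$, proving joint convexity of the shared budget set from the convexity of $1/\log(1+\gamma)$, and flagging hemicontinuity of the constraint maps and the feasibility condition needed for nonemptiness --- are precisely the hypotheses that the paper's three-line, citation-based proof leaves implicit.
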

\vspace{-0.5 cm}
\begin{proof}
The proof is in Appendix A.
\end{proof}

In order to find the GNE, the best response of each device is derived as follows.
\begin{prop}
\emph{Given a strategy vector $\boldsymbol{a}_{-i}$, the best response of HTD $i$ is given by}
\begin{equation}
a_i=\min\Big({\frac{E_i}{ T P_i}, 1-\sum_{j \in \mathcal{H}, j \neq i}\tau_j}-\sum_{j \in \mathcal{M}}\frac{b_j}{T\cdot W\log(1+\gamma_j)}\Big)\label{brh1}
\end{equation}	\label{propGNEHTD}
\label{HTDbest}\end{prop}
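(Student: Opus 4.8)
The plan is to exploit the fact that the HTD utility is a strictly increasing affine function of the time fraction, so that its best response must sit at the upper boundary of the (closed) feasible interval. First I would record that, by (\ref{utilhtd}), the utility of HTD $i$ can be written as $U_i(a_i)=c_i\,\tau_i$, where $c_i=\mathbb{E}_{h_i}[W\log(1+|h_i|^2P_i/\sigma^2)]$ is a strictly positive constant depending neither on $\tau_i$ nor on the actions $\boldsymbol{a}_{-i}$ of the remaining devices. Since $c_i>0$, the map $\tau_i\mapsto U_i(a_i)$ is strictly increasing, a point already noted informally after (\ref{utilhtd}).

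Next I would make the feasible set explicit. Given $\boldsymbol{a}_{-i}$, the constraints in (\ref{HTDopt}) confine $\tau_i$ to the closed interval $\mathcal{S}_i(\boldsymbol{a}_{-i})=[0,u_i]$, where $u_i=\min\{E_i/(TP_i),\,1-\sum_{j\in\mathcal{H},j\neq i}\tau_j-\sum_{j\in\mathcal{M}}b_j/(TW\log(1+\gamma_j))\}$. The two arguments of the minimum come, respectively, from the energy-budget constraint $P_i\tau_i\le E_i/T$ and from the time-sharing constraint $\tau_i\le 1-\sum_{j\neq i}\tau_j$, after substituting each MTD's time fraction $\tau_j=b_j/(TW\log(1+\gamma_j))$ via (\ref{thres}). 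I would also note, invoking the existence result of Proposition~1, that the profile lies in a regime where $u_i\ge 0$, so that $\mathcal{S}_i(\boldsymbol{a}_{-i})$ is a nonempty compact interval.

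The conclusion is then immediate: maximizing a strictly increasing function over the closed interval $[0,u_i]$ is attained (uniquely) at the right endpoint $\tau_i=u_i$. Hence the best response is $a_i=u_i$, which is precisely the expression in (\ref{brh1}).

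I do not anticipate a substantive obstacle, as the argument reduces to the standard observation that a monotone objective is optimized at the boundary of its domain. The only point requiring care is the bookkeeping that establishes the feasible set is exactly the stated interval: specifically, confirming that the time-sharing bound, once the MTD time fractions are re-expressed through their thresholds $\gamma_j$, collapses into the single upper bound appearing in the minimum, and checking nonnegativity of that bound so that the best response is well defined.
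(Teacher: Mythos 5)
Your proposal is correct and follows essentially the same route as the paper's proof: the HTD utility is an increasing (linear) function of $\tau_i$, so the best response is the upper bound of the feasible interval, which by the constraints of (\ref{HTDopt}) is exactly the minimum in (\ref{brh1}). Your additional bookkeeping (writing the feasible set explicitly as $[0,u_i]$ and checking nonemptiness) is a careful elaboration of what the paper states in one line, not a different argument.
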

\vspace{-1 cm}
\begin{proof}
The proof is in Appendix A.
\end{proof}

\begin{prop}
\emph{	The best response of each MTD $i$ is
	\[
	\hspace{-0.3 cm}a_i=
	\begin{dcases}
\gamma_{i,\textrm{UB}}, \textrm{if } \gamma'_i\geq\gamma_{i,\textrm{UB}}\\
	\gamma'_i,      \textrm{if} \hspace{0.1 cm} e^{\frac{b_i}{T\cdot W (1-\sum_{j \in \mathcal{M}, j \neq i}\frac{b_j}{T\cdot W\log(1+\gamma_j)}-\sum_{j \in \mathcal{H}}\tau_j)}} \leq \gamma'_i \leq \gamma_{i,\textrm{UB}}\\
	e^{\frac{b_i}{T\cdot W (1-\sum_{j \in \mathcal{M}, j \neq i}\frac{b_j}{T\cdot W\log(1+\gamma_j)}-\sum_{j \in \mathcal{H}}\tau_j)}}, \textrm{otherwise}
	\end{dcases}
	\]}\label{propGNEMTD}
\label{MTDbest}\end{prop}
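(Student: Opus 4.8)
The plan is to fix an arbitrary strategy profile $\boldsymbol{a}_{-i}$ of the opposing devices, recognize that the feasible strategy set $\mathcal{S}_i(\boldsymbol{a}_{-i})$ is then a compact interval, and maximize $U_i$ over it by combining the concavity established in Lemma~1 with the location of the unconstrained maximizer $\gamma'_i$ identified in Lemma~2. Conceptually, the best response will turn out to be nothing more than the projection of $\gamma'_i$ onto that interval.

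First I would collect the two constraints that bound $\gamma_i$ once $\boldsymbol{a}_{-i}$ is held fixed. The deadline constraint (\ref{gammaub}) gives the upper bound $\gamma_{i,\textrm{UB}}=-\alpha_i^2 P_i\log(1-\sqrt[t_i]{\epsilon})/\sigma^2$, while the coupled time-budget constraint (\ref{mtdopt2}) gives the lower bound $\gamma_{i,\textrm{LB}}=e^{\frac{b_i}{T W\left(1-\sum_{j\in\mathcal{M},j\neq i}\frac{b_j}{T W\log(1+\gamma_j)}-\sum_{j\in\mathcal{H}}\tau_j\right)}}$, in which every term depending on the other players is frozen by the choice of $\boldsymbol{a}_{-i}$. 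Hence $\mathcal{S}_i(\boldsymbol{a}_{-i})=[\gamma_{i,\textrm{LB}},\gamma_{i,\textrm{UB}}]$ is a closed interval, nonempty whenever the profile admits feasible play (which is guaranteed by the existence result of Proposition~1). Since the utility of MTD~$i$ depends only on its own action $\gamma_i$, computing the best response is a one-dimensional constrained maximization over this interval.

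Next I would invoke the two lemmas. By Lemma~1, $U_i=-\chi_i$ is concave in $\gamma_i$, and by Lemma~2 its derivative vanishes precisely at $\gamma'_i$; concavity therefore makes $\gamma'_i$ the unique global maximizer, with $U_i$ increasing on $(-\infty,\gamma'_i)$ and decreasing on $(\gamma'_i,\infty)$. A three-way case split then yields the claimed best response. If $\gamma'_i\geq\gamma_{i,\textrm{UB}}$, the maximizer lies at or beyond the right endpoint, so $U_i$ is increasing across all of $[\gamma_{i,\textrm{LB}},\gamma_{i,\textrm{UB}}]$ and the optimum is $\gamma_{i,\textrm{UB}}$. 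If $\gamma_{i,\textrm{LB}}\leq\gamma'_i\leq\gamma_{i,\textrm{UB}}$, the interior critical point is feasible and hence optimal, giving $\gamma'_i$. Otherwise $\gamma'_i<\gamma_{i,\textrm{LB}}$, so $U_i$ is decreasing across the interval and the optimum is the left endpoint $\gamma_{i,\textrm{LB}}$. These three branches are exactly those listed in the statement.

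The individual steps are routine once the structure is in place; the only point requiring genuine care is the first one, namely checking that the heterogeneous deadline and time-budget constraints really do collapse to a single interval $[\gamma_{i,\textrm{LB}},\gamma_{i,\textrm{UB}}]$ after $\boldsymbol{a}_{-i}$ is fixed, and that this interval is nonempty. Rather than re-deriving feasibility from scratch, I would lean on the feasibility already secured by Proposition~1, so that the remaining argument is a clean application of the ``projection of the unconstrained maximizer'' principle for concave objectives on an interval.
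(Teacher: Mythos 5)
Your proof is correct and follows essentially the same route as the paper: the paper likewise invokes Lemmas~1 and~2 to get concavity with unconstrained maximizer $\gamma'_i$, and then argues the three cases (interior point feasible, upper bound when $\gamma'_i$ exceeds it, lower bound when $\gamma'_i$ falls below it), which is exactly your projection-onto-the-interval argument. The only cosmetic difference is that you explicitly flag nonemptiness of $\mathcal{S}_i(\boldsymbol{a}_{-i})$ via Proposition~1, a point the paper leaves implicit.
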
\vspace{-0.5 cm}
where $\gamma'_i$ is given by in (\ref{mtdmax}), and $\gamma_{i,\textrm{UB}}=-\frac{\alpha_i^2 P_i\log(1-\sqrt[t_i]{\epsilon})}{\sigma^2}$ is the upper bound on $\gamma$ as in (\ref{gammaub}).

\begin{proof}
The proof is in Appendix A.
\end{proof}
\vspace{-0.3 cm}
Thus, based on the properties of the best response, the GNE of both MTDs and HTDs is derived next. First, before characterizing the GNE of the proposed HTA game, we define the following key parameters

\begin{itemize}
\item The set $\mathcal{M}'=\Big \{i \in \mathcal{M} \hspace{0.3 cm}\textrm{s.t.} \hspace{0.3 cm} \gamma'_i \leq \gamma_{i,\textrm{UB}}  \Big \}$.
\item The set of sets
\\\small
 $\mathbb{A} =\{\mathcal{A}_j \subset \mathcal{L}$ s.t. $\sum_{i \in \mathcal{A} \cap \mathcal{M}'}\frac{b_i}{T\cdot W \log(1+\gamma'_i)}+\sum_{i \in \mathcal{A} \cap \mathcal{H}}\frac{E_i}{ T\cdot WP_i}+\sum_{i \in \mathcal{A}\setminus \mathcal{M}-\mathcal{M}'}\frac{b_i}{T\cdot W \log(1+\gamma_{i,\textrm{UB}})}\leq 1\}$.
 \end{itemize}
 \normalsize

%
%
%

%
 
Then, the GNE of the proposed HTA game can then be derived as follows. 
\begin{thm}
\emph{The GNE of the HTA game is dependent on the following two conditions:}
	\begin{itemize}
	\item \emph{If $\sum_{i \in \mathcal{M}'}\frac{b_i}{T\cdot W \log(1+\gamma'_i)}+\sum_{i \in \mathcal{H}}\frac{E_i}{ T\cdot WP_i}+ \sum_{i \in \mathcal{M}-\mathcal{M}'}\frac{b_i}{T\cdot W \log(1+\gamma_{i,\textrm{UB}})} \leq 1$, the GNE of heterogeneous time allocation game is \emph{unique}. In this case, the GNE strategy for each MTD $i$ is
	$a^{**}_i=\gamma'_i$ $\forall i \in \mathcal{M}'$ and $a^{**}_i=-\frac{\alpha_i^2 P_i\log(1-\sqrt[t_i]{\epsilon})}{\sigma^2}$ $\forall i \in \mathcal{M}-\mathcal{M}'$. For each HTD $i$, the GNE strategy is  $a^{**}_i=\frac{E_i}{ T\cdot WP_i}$.}
	
\item \emph{If $\sum_{i \in \mathcal{M}}\frac{b_i}{T\cdot W \log(1+\gamma'_i)}+\sum_{i \in \mathcal{H}}\frac{E}{ T\cdot WP_i}+\sum_{i \in \mathcal{M}-\mathcal{M}'}\frac{b_i}{T\cdot W \log(1+\gamma_{i,\textrm{UB}})}>1$, the GNE is \emph{not unique} and the GNE set $\mathcal{N}$ is given by
	$\mathcal{N}=\cup_{1\leq j \leq |\mathbb{A}|} \mathcal{N}_j$ where}
	
	$\mathcal{N}_j=\{\boldsymbol{a^*} \in \prod_{1\leq i \leq L}\mathcal{S}_i \hspace{0.3 cm} \textrm{s.t.} \hspace{0.3 cm} a^*_i=\gamma'_i \hspace{0.2 cm} \forall i \in \mathcal{A}_j \cap \mathcal{M}', a^*_i= \frac{E_i}{ T\cdot WP_i} \hspace{0.2 cm} \forall i \in \mathcal{A}_j \cap \mathcal{H}, a^*_i= -\frac{\alpha_i^2 P_i\log(1-\sqrt[t_i]{\epsilon})}{\sigma^2} \hspace{0.2 cm} \forall i \in \mathcal{A}_j \setminus \mathcal{M}-\mathcal{M}'$, \textrm{and} $\sum_{i \in \mathcal{L}-\mathcal{A}_j} a^*_i=1-\sum_{i \in \mathcal{A}_j \cap \mathcal{M}'}\frac{b_i}{T\cdot W \log(1+\gamma'_i)}+\sum_{i \in \mathcal{A}_j \cap\mathcal{H}}\frac{E_i}{ T\cdot WP_i}+ \sum_{i \in \mathcal{A}_j \cap (\mathcal{M}-\mathcal{M}')} \frac{b_i}{T\cdot W \log(1+\gamma_{i,\textrm{UB}})}$ s.t. $a^*_i < \gamma'_i$ $\forall i \in (\mathcal{L}-\mathcal{A}_j) \cap \mathcal{M}'$, $a^*_i < -\frac{\alpha_i^2 P_i\log(1-\sqrt[t_i]{\epsilon})}{\sigma^2} \hspace{0.1 cm} \forall i \in (\mathcal{L}-\mathcal{A}_j)\cap(\mathcal{M}-\mathcal{M}')$ and $a^*_i< \frac{E_i}{ T\cdot WP_i} \hspace{0.1 cm} \forall i \in (\mathcal{L}-\mathcal{A}_j)\cap \mathcal{H}\}.$
 \end{itemize}\label{GNEthm}
	\end{thm}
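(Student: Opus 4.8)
The plan is to treat the GNE as a fixed point of the best-response correspondence and to exploit the fact that the players' utilities are \emph{decoupled} (each $U_i$ depends only on $a_i$), so that the only interaction among the devices enters through the shared time budget $\sum_{i\in\mathcal{L}}\tau_i\le 1$ that shapes their strategy spaces. First I would record, from the best-response characterizations of Propositions~\ref{HTDbest} and~\ref{MTDbest} together with the concavity of $U_i$ established in Lemma 1, the ``greedy'' (time-unconstrained) action of each device: every HTD wants $\tau_i=\frac{E_i}{T\cdot W P_i}$, every MTD $i\in\mathcal{M}'$ wants $\gamma'_i$ from (\ref{mtdmax}), and every MTD $i\in\mathcal{M}-\mathcal{M}'$ wants $\gamma_{i,\textrm{UB}}$ from (\ref{gammaub}), since for the latter $U_i$ is increasing on the whole feasible interval. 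The total time consumed by this greedy profile is precisely the quantity $S$ appearing on the left-hand side of the two displayed conditions, and the entire argument hinges on whether $S\le1$ or $S>1$.

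For the unique case $S\le1$, I would first verify that the proposed profile $\boldsymbol{a}^{**}$ is a GNE: because the greedy demand fits inside the budget, for every device the residual time $1-\sum_{j\ne i}\tau_j$ is at least its own greedy demand, so the coupling constraint is slack and each branch of Propositions~\ref{HTDbest} and~\ref{MTDbest} selects exactly the greedy action. Uniqueness then follows from strict monotonicity: since each HTD utility is increasing in $\tau_i$ and each MTD utility is strictly concave with peak at $\gamma'_i$, no device ever uses strictly less time than its greedy demand unless forced by a binding budget; but under $S\le1$ the budget can never bind, so every best response is pinned to the greedy action and the fixed point is unique.

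For the non-unique case $S>1$, I would argue in two stages. I would first show that at \emph{every} GNE the budget is saturated, $\sum_{i\in\mathcal{L}}\tau_i=1$: were it slack, every device could play its greedy action, forcing the total to equal $S>1$, a contradiction. I would then fix a GNE and let $\mathcal{A}$ be the set of devices playing their greedy optimum; feasibility forces the greedy demand of $\mathcal{A}$ to be at most $1$, i.e. $\mathcal{A}\in\mathbb{A}$, while every device in $\mathcal{L}-\mathcal{A}$ must be strictly squeezed off its optimum (an HTD below its energy cap, an MTD pushed to the binding endpoint of its coupled strategy space and hence strictly away from $\gamma'_i$), and these squeezed devices exactly absorb the residual time. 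This is precisely the description of some $\mathcal{N}_j$, giving $\mathcal{N}\subseteq\cup_j\mathcal{N}_j$. Conversely, I would check that each profile of $\mathcal{N}_j$ is a fixed point: the devices in $\mathcal{A}_j$ play a feasible optimum (feasible because $\mathcal{A}_j\in\mathbb{A}$), while each squeezed device sits exactly at the lower bound of its strategy space induced by the others and therefore cannot profitably move toward its peak.

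The main obstacle I anticipate is the non-unique case, and specifically the exact equality $\mathcal{N}=\cup_j\mathcal{N}_j$. The two delicate points are (i) confirming that a squeezed device's forced action is genuinely its \emph{unique} best response, which needs the strict monotonicity/concavity of $U_i$ so that the time-constraint endpoint, rather than the interior peak, is selected, and this is exactly the origin of the strict inequalities defining $\mathcal{N}_j$; and (ii) handling the feasibility cap $\gamma_{i,\textrm{UB}}$ and the resulting three-way split of the MTDs (into $\mathcal{M}'$, $\mathcal{M}-\mathcal{M}'$, and squeezed devices) so that the accounting of consumed time in the definition of $\mathbb{A}$ and of the residual time in $\mathcal{N}_j$ stays consistent across every branch of Proposition~\ref{MTDbest}. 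The remaining verifications are routine substitutions into the best-response formulas.
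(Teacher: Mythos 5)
Your proposal is correct and follows essentially the same route as the paper's own proof: both arguments rest on the best-response characterizations of Propositions 2 and 3 (each device plays its ``greedy'' action $\gamma'_i$, $\gamma_{i,\textrm{UB}}$, or $E_i/(T\cdot W P_i)$ unless squeezed to the boundary induced by the shared time budget), combined with accounting of the total greedy demand $S$ against the budget of $1$. Your explicit budget-saturation step in the case $S>1$ and the two-way inclusion argument for the GNE set are a cleaner reorganization of the paper's deviation-based argument, not a different method.
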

\begin{proof}
The proof is in Appendix B.
\end{proof}
	

%

\begin{algorithm}[t]
	\scriptsize The BS chooses an initial feasible vector of time allocations $\boldsymbol{\tau}_0$ and broadcasts it to its associated devices
	
	\scriptsize Each device initializes the current sum of time fractions as $\sum_{i \in \mathcal{L}}\tau_{i,0}$.
	
	\Repeat{convergence to GNE}{
		
		\ForEach{device $i \in 1, ..., L$}{
			Device $i$ computes its best response. 
			
			Device $i$ broadcasts its newly computed best response to all devices in $\mathcal{L}$.
			
			All devices update their current sum of time fractions. }
		
		
	}
	\caption{GNE Learning Algorithm for the Heterogeneous Time Allocation Game}\label{alg1}
\end{algorithm}

Based on the best response functions in Propositions \ref{HTDbest} and \ref{MTDbest}, it is clear that the game does not admit any dominant strategies. Thus, to find the GNE of the HTA game, we present a learning algorithm based on the nonlinear Gauss-Seidel type method in \cite{gnep}. This algorithm allows the IoT devices to find their GNE strategy in a distributed manner based on the best response dynamics. The algorithm is defined in Algorithm 1, and its complexity is derived next.

\vspace{-0.2 cm}

\begin{thm}
	\emph{The complexity of Algorithm \ref{alg1} is $O(L^2)$, and this algorithm converges in at most three iterations.}\label{complexityGNE}
\end{thm}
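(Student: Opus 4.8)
The statement bundles two independent claims — an $O(L^2)$ overall running cost and a three-iteration convergence bound — so I would treat them separately.

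\textbf{Complexity.} First I would note that, by Propositions~\ref{HTDbest} and~\ref{MTDbest}, the best response of every device is a closed-form expression whose only coupling to the other players is through the single running quantity $\sum_{j\neq i}\tau_j$ (split, if convenient, into $\sum_{j\in\mathcal{H},\,j\neq i}\tau_j$ and $\sum_{j\in\mathcal{M}}\frac{b_j}{TW\log(1+\gamma_j)}$). Because each device keeps and incrementally corrects the stored sum $\sum_j\tau_j$, extracting $\sum_{j\neq i}\tau_j$ and evaluating the best response both cost $O(1)$. The dominant cost of a single device update is therefore the broadcast: once device $i$ announces its new action, each of the $L$ devices applies an $O(1)$ correction to its stored sum, which is $O(L)$ work. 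One full sweep of the $L$ devices thus costs $O(L^2)$, and since the number of sweeps is at most three by the convergence claim, the total complexity is $O(L^2)$.

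\textbf{Reduction.} For the convergence bound my plan is to put the heterogeneous best responses on a common footing. For each device define its \emph{unconstrained target} time fraction $\theta_i$, namely $\theta_i=\frac{E_i}{TP_i}$ for an HTD, $\theta_i=\frac{b_i}{TW\log(1+\gamma'_i)}$ for an MTD in $\mathcal{M}'$, and $\theta_i=\frac{b_i}{TW\log(1+\gamma_{i,\textrm{UB}})}$ for an MTD in $\mathcal{M}-\mathcal{M}'$. Converting the $\gamma$-domain response of Proposition~\ref{MTDbest} back to a time fraction through (\ref{thres}), every \emph{flexible} device (an HTD, or an MTD in $\mathcal{M}'$) plays $\tau_i=\min\{\theta_i,\,1-\sum_{j\neq i}\tau_j\}$, while every MTD in $\mathcal{M}-\mathcal{M}'$ plays the constant $\theta_i$. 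Two monotone facts fall out at once and carry the whole argument: (i) a best response never exceeds its target, $\tau_i\le\theta_i$; and (ii) the feasibility clamp keeps $\sum_j\tau_j\le 1$ after every update. Writing $C:=1-\sum_{i\in\mathcal{M}-\mathcal{M}'}\theta_i$, the flexible devices are thus playing pure water-filling against a shared capacity $C$.

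\textbf{Three sweeps.} The core of the proof is then a two-step claim. After sweep~1 each MTD in $\mathcal{M}-\mathcal{M}'$ already sits at its constant $\theta_i$, and by (i)--(ii) every flexible device satisfies $\tau_i\le\theta_i$ with flexible total at most $C$; call such a configuration \emph{clean}. Sweep~2 is then a single Gauss--Seidel pass of capacity-$C$ water-filling from a clean start (the fixed MTDs merely re-confirm $\theta_i$). I would show such a pass lands exactly on a fixed point by a monotone sandwich: from a clean start each update obeys $\tau_i^{\textrm{new}}=\min\{\theta_i,\,C-\sum_{j\neq i}\tau_j\}\ge\tau_i^{\textrm{old}}$, because the running total stays $\le C$ and so the capacity each device sees is at least its own current value; the pass is therefore coordinatewise non-decreasing and preserves total $\le C$. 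Re-running the pass, an induction on the update order shows every coordinate is simultaneously $\le$ and $\ge$ its value at the end of sweep~2, hence unchanged. Therefore the end of sweep~2 is a fixed point of the best response, i.e. a GNE (matching the two cases of Theorem~\ref{GNEthm}: all devices at target when $D:=\sum_i\theta_i\le1$, and a point of $\mathcal{N}$ when $D>1$). The third sweep produces no change and triggers the termination test, so the algorithm halts after at most three iterations.

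\textbf{Main obstacle.} The routine part is the complexity count; the real work is the convergence bound, and within it the delicate point is the clean-sweep lemma — verifying that a single Gauss--Seidel water-filling pass from a configuration with total $\le C$ reaches a fixed point rather than merely decreasing monotonically. The monotone-sandwich induction above is what pins the count at one productive sweep (sweep~2), preceded by one cleaning sweep (sweep~1) and followed by one confirming sweep (sweep~3). A secondary care point is the faithful reduction of the three-branch MTD response of Proposition~\ref{MTDbest} to the single water-filling form $\min\{\theta_i,\,1-\sum_{j\neq i}\tau_j\}$, including the degenerate clamping to $\tau_i=0$ when a device momentarily sees nonpositive available capacity during sweep~1.
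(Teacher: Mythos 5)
Your proof is correct and follows the same three-sweep skeleton as the paper's own proof: a first sweep in which devices that are above their unconstrained optima come down, a second sweep in which (the total now being at most one) only increases are possible, and a third, confirming sweep that triggers termination. The difference is one of rigor rather than strategy. The paper assumes WLOG that the initial sum of time fractions equals one, notes that after the first sweep only increases are possible, and then simply \emph{asserts} that "the algorithm terminates in the third round because no other device can further improve its utility" --- it never addresses the delicate point you isolate, namely that the increases performed during the second sweep could in principle interfere with one another and leave some device wanting to move again. Your clean-sweep lemma supplies exactly this missing justification: after reducing Propositions~\ref{HTDbest} and~\ref{MTDbest} to the common water-filling form $\tau_i=\min\{\theta_i,\,1-\sum_{j\neq i}\tau_j\}$, the monotone-sandwich induction shows that a single Gauss--Seidel pass from a clean configuration lands exactly on a fixed point --- either some device hits the capacity clamp, after which the running total equals the capacity and every subsequent best response is unchanged, or every flexible device reaches its target $\theta_i$ and the profile is trivially fixed; both outcomes match the two cases of Theorem~\ref{GNEthm}. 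On the complexity half, your accounting ($O(1)$ best-response evaluation plus $O(L)$ broadcast/update cost per device update) differs cosmetically from the paper's (each best response costs $O(L)$ by Propositions~\ref{HTDbest} and~\ref{MTDbest}, as stated in Corollary 1), but both yield $O(L^2)$ overall. One shared caveat: like the paper, you pass over the degenerate case of an MTD in $\mathcal{M}-\mathcal{M}'$ whose availability lower bound exceeds $\gamma_{i,\textrm{UB}}$ (an empty feasible set), though you at least flag the analogous zero-capacity clamping issue explicitly.
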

\vspace{-0.5 cm}
\begin{proof}
The proof is in Appendix B.
\end{proof}
\vspace{-0.4 cm}
\begin{cor}
\emph{The complexity of the number of computations done at each device, to find the GNE, is $O(L)$. }
\end{cor}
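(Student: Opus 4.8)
The plan is to derive this directly from Theorem~\ref{complexityGNE}, which establishes that Algorithm~\ref{alg1} has total complexity $O(L^2)$ and terminates in at most three iterations. The key is to argue that this aggregate cost is distributed among the $L$ devices, with each one performing only $O(L)$ operations, so that the global bound is recovered as $L \cdot O(L) = O(L^2)$.

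First I would observe that in Algorithm~\ref{alg1} the coupling between devices enters the best-response functions of Propositions~\ref{HTDbest} and~\ref{MTDbest} only through the scalar aggregate $\sum_{j \in \mathcal{L}} \tau_j$ (equivalently, through the total occupied time fraction). Since each device maintains this running sum, once the sum is known, evaluating the best response---whether the $\min$ expression for an HTD in~(\ref{brh1}) or the three-case rule for an MTD in Proposition~\ref{MTDbest}---requires only a constant number of arithmetic and function evaluations. Hence each best-response computation is $O(1)$ per device, independent of $L$.

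Next I would count the per-device bookkeeping. Within a single iteration, each of the $L$ devices broadcasts its updated strategy, and upon each such broadcast every device performs a constant-time update of its stored sum $\sum_{j \in \mathcal{L}} \tau_j$. Thus a single device processes $L$ broadcasts and performs $O(L)$ update operations per iteration, in addition to its own $O(1)$ best-response evaluation. Because Theorem~\ref{complexityGNE} guarantees convergence within at most three iterations, the total number of computations at each device is $3 \cdot O(L) = O(L)$.

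The only point requiring care---and what I would regard as the main obstacle---is to justify that the best response is genuinely an $O(1)$ function of the maintained sum, so that the dominant per-device cost is the $O(L)$ sum updates rather than any hidden dependence on the full strategy vector $\boldsymbol{a}_{-i}$. This is precisely what the closed-form expressions in Propositions~\ref{HTDbest} and~\ref{MTDbest} supply: each device never needs to re-scan all other strategies, since incremental maintenance of the single scalar $\sum_{j \in \mathcal{L}} \tau_j$ suffices. Combining the $O(1)$ best-response cost with the $O(L)$ communication-driven updates over a constant number of iterations yields the claimed $O(L)$ per-device complexity.
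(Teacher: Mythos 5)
Your proof is correct, and it reaches the $O(L)$ bound by a genuinely different accounting than the paper. The paper's own proof attributes the linear cost to the best-response evaluation itself: it notes that Algorithm~1 converges in at most three iterations, that each device therefore solves only a constant number (at most two) of best-response problems, and that each such evaluation costs $O(L)$ because the expressions in Propositions~\ref{propGNEHTD} and~\ref{propGNEMTD} are computed by summing over the stored actions of all other devices. You instead observe that the coupling enters those best responses only through the scalar $\sum_{j \in \mathcal{L}} \tau_j$, so that each best-response evaluation is $O(1)$ once the running sum is available, and you locate the $O(L)$ cost in the communication-driven bookkeeping: each device processes $L$ broadcasts per iteration with a constant-time sum update, giving $3 \cdot O(L) = O(L)$ over the constant number of iterations. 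Your accounting is arguably more faithful to Algorithm~1 as written (which explicitly has devices maintain and update the running sum), and it yields the sharper observation that the pure optimization work per device is $O(1)$ per iteration, with all linear cost being message processing; the paper's accounting is blunter but needs no incremental-update mechanism. One caveat for your version: a constant-time update of the sum upon receiving device $j$'s broadcast requires either that the broadcast carry the change in $\tau_j$ or that the receiver store device $j$'s previous action (the latter being exactly the $O(L)$ memory the paper already assumes), so your argument implicitly relies on one of these; this is an implementation detail and does not affect the asymptotic claim.
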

\vspace{-0.5 cm}
\begin{proof}
The proof is in Appendix B.
\end{proof}

The GNE assumes that players are fully rational which implies that they have enough computational powers to compute their GNE and can gather precise information on the actions of other devices. Indeed, to compute the GNE, each IoT device will need to record the actions of all other IoT devices. This can lead to a massive amount of information exchange in each iteration among the IoT devices and also requires all IoT devices to have enough memory to store all of these actions. This may not be realistic in an IoT ecosystem in which devices are heterogeneous and have different computational capabilities in terms of memory and processing powers. For instance, IoT devices can range from small sensors and wearable devices that have very limited computational capabilities to smartphones that have higher computational capabilities. To cater for such constraints, an alternative approach to IoT resource allocation is proposed next. This approach extends the powerful framework of \emph{cognitive hierarchy theory} \cite{Camerer_acognitive}, that can properly accommodate such heterogeneous device capabilities. 
\vspace{-0.4 cm}
\subsection{Cognitive Hierarchy Framework for IoT Resource Allocation}
Cognitive hierarchy theory \cite{Camerer_acognitive} is a branch of behavioral
game theory based on the concept of bounded rationality.
In general, bounded rationality means that each player finds the
best strategy based on the information that is accessible to this player
and the player's computational or cognitive capacity as well as the time available for decision making.
The CH framework considers a hierarchy of players in which players are distributed into discrete levels of rationality. Here, players have different beliefs about other players depending on their rationality level. The beliefs formed by each player at each rationality level is dependent on the information that the device can obtain and store, which is constrained by the device's memory capacity and the resources available at the device to obtain the information (such as energy). A player at level $k$ selects its strategy based on the strategies of players belonging to levels lower or equal to $k$. Players belonging to the lowest level $0$ do not engage in any rational thinking and, instead, they select their strategy randomly. We note that CH significantly differs from classical hierarchical games, such as Stackelberg games and their variants \cite{gametheory}, in which all players are considered to have the same rationality.  Also, in a Stackelberg game, the hierarchy levels are defined based solely on the roles of the players, i.e., players are classified as either leaders or followers, rather than based on capabilities, as is done in CH. 


For our problem, devices are restricted by their computational capabilities and resources. A player at a higher level of rationality can consider more levels when computing its strategy. Hence, devices must be grouped into multiple levels of intelligence depending on their capabilities. Thus, we propose a CH model that extends the model of \cite{Camerer_acognitive} and which has the following characteristics: \begin{enumerate}
	\item The number of players at each level $k$ is distributed according to a Poisson distribution $f$ with rate $\tau$.
	Since MTDs generally have lower computational capabilities than HTDs, it is assumed that MTDs belong to lower rationality levels than HTDs. Hence, MTDs belong to level 0 up to some level $l$ while HTDs belong to rationality levels greater than $l$. Also, we assume that devices belonging of the same CH level are of the same type.
	\item A player at level $k$ knows the true proportions $f(0)$, $f(1)$,...,$f(k-1)$ of players which are at lower rationality levels. Since these proportions do not add up to one, a player at level $k$ computes the relative frequency $g_k(h)$ of players at a level $h$ ($0 \leq h < k$) as
	$g_k(h)=\frac{f(h)}{\sum_{i=0}^{k-1}f(i)}$, 
and $g_k(h)=0$ $\forall h \geq k$.
	This assumption is known as the overconfidence assumption and is a good model for games with human players. However, in our case, the players are IoT devices, and it is more likely to have devices of the same computational capabilities. Thus, we relax this assumption so that a device at level $k$ also knows the proportion of devices at the same level $k$. This extension to CH is challenging because the strategy of each device will depend on the strategies taken by devices at the same level thus requiring new solutions that go beyond the existing literature [12].
	\item In a classical CH theory model \cite{Camerer_acognitive}, players at the lowest level 0 are assumed to make their choices randomly according to a uniform distribution. In the context of IoT, level-0 devices correspond to MTDs with very limited resources such as sensors. However, a random choice of $\gamma_i$ by a level-0 device $i$ according to a uniform distribution might result in occupying a large fraction of the time duration, which will considerably degrade the performance of the remaining devices. Thus, we consider that each device at level 0 will choose its action randomly within $[\tau_{0,LB},1]$ according to a decreasing distribution such as an exponential distribution with mean $\mu$. The time fraction $\tau_{0,LB}$ is the lower bound on the time fraction of level-0 devices according to (\ref{MTDopt1}).
\end{enumerate}

Note that the Poisson distribution has been shown in \cite{Camerer_acognitive} to accurately capture situations in which fewer players will be at a level higher than $k$, as the rationality level $k$ grows larger. As a result, in our setting, this assumption holds because there is a limit on the computational capability of the devices (in terms of memory and processing power). Indeed, devices with higher computational capability are more expensive, and, hence, are fewer in number. In a typical IoT ecosystem, the number of HTDs is small as opposed to the numerous sensors pertaining to different types of IoT applications.


Due to the aforementioned CH characteristics, each player will seek to find its CH strategy using an iterative process. During this process, each player will perform limited steps of strategic thinking depending on its rationality level. A device at level $k$ will anticipate the strategies of devices at levels $0$ to $k-1$. Given that the strategies of lower level devices will help a level-$k$ device to take a more informed decision about the IoT network, then, this device performs $k$ steps of thinking.


In the HTA game, the strategy space of each device $i$ is dependent on the actions of the remaining devices. Further, in CH, each device has its own beliefs $g_k$ about the rationality (and actions) of the remaining devices depending on its rationality level $k$. Consequently, the strategy space of each device $i$ is dependent on the device's beliefs $g_k$. Constraint $\sum_{i \in \mathcal{L}} \tau_i=1$ becomes $\mathbb{E}_{g_k}\Big[\sum_{i \in \mathcal{L}} \tau_i\Big]=1$. We denote by $\mathcal{S}_{i,g_k}$ the strategy space of device $i$ based on its belief $g_k$. The expression of $\mathbb{E}_{g_k}\Big[\sum_{i \in \mathcal{L}} \tau_i\Big]$  of an MTD $i$ at a CH level $k$ ($1 \leq k \leq l-1$) is given by
\begin{equation}
\mathbb{E}_{g_k}\Big[\sum_{i \in \mathcal{L}} \tau_i\Big]=\sum_{h=0}^{k}g_k(h)\sum_{j \in \mathcal{L}}\frac{b_h}{T\cdot W\log(1+\gamma^*_j(h))}\label{expectedm}
\end{equation}
The expression of $\mathbb{E}_{g_k}\Big[\sum_{i \in L} \tau_i\Big]$  of an HTD $i$ at a CH level $k$ ($l \leq k$) is given by
\begin{eqnarray}
\mathbb{E}_{g_k}\Big[\sum_{i \in \mathcal{L}} \tau_i\Big]&=&\sum_{h=0}^{l-1}g_k(h)\sum_{j \in \mathcal{L}}\frac{b_h}{T\cdot W\log(1+\gamma^*_j(h))}\nonumber\\
&&+\sum_{h=l+1}^{k}g_k(h)\sum_{j \in \mathcal{L}}\tau^*_j(h)\label{expectedh}
\end{eqnarray}
where $\gamma^*_j(h)$ is the CHE strategy of MTD $j$ at level $h$ and $\tau^*_j(h)$ is the CHE strategy of HTD $j$ at level $h$.  Since devices belonging to the same CH level are of the same type, the only parameter that is different is the channel quality.
 Computing (\ref{expectedm}) and (\ref{expectedh}) requires evaluation of the CHE strategy at each level $h$ ($h\leq k$) for all the channel variance values of the devices in $\mathcal{L}$, which has a computational complexity linear in the number of devices $L$. Given that the number of devices $L$ is very large, we assume that the BS quantize the set of channel variances of all devices into a discrete set $\mathcal{C}$ of $C$ values where $C << L$. Then, it broadcasts the quantized set to all of its associated devices.
 Hence, (\ref{expectedm}) and (\ref{expectedh}) can be rewritten as:
\begin{equation}
\mathbb{E}_{g_k}\Big[\sum_{i \in \mathcal{L}} \tau_i\Big]=\sum_{h=0}^{k}g_k(h)\sum_{q \in \mathcal{C}}N_q\frac{b_h}{T\cdot W\log(1+\gamma^*_q(h))},\label{expectedmq}
\end{equation}
\begin{eqnarray}
\mathbb{E}_{g_k}\Big[\sum_{i \in \mathcal{L}} \tau_i\Big]&=&\sum_{h=0}^{l-1}g_k(h)\sum_{q \in \mathcal{C}}N_q\frac{b_h}{T\cdot W\log(1+\gamma^*_q(h))} \nonumber\\
&&+\sum_{h=l+1}^{k}g_k(h)\sum_{q \in \mathcal{C}}N_q\tau^*_q(h),\label{expectedhq}
\end{eqnarray}
where  $\gamma^*_q(h)$ is the CHE of MTD having quantized channel variance value $q$ at CH level $h$, and $\tau^*_q(h)$ is the CHE of HTD having quantized channel variance value $q$ at CH level $h$, $N_q$ is the number of devices having quantized channel variance value $q$.
To compute (\ref{expectedmq}) and (\ref{expectedhq}), we can see that each device $i$ at CH level $k$ must know the expected equilibrium strategies of devices at the same level. This is challenging since each device finds its CHE strategy based on its beliefs and not by using learning based on the devices' actions as in GNE. Since devices belonging to the same level are of the same type, we assume that each device $i$ believes that all devices at the same rationality level will choose the same action.

Next, we define the cognitive hierarchy equilibrium of the HTA game as follows.

%


\begin{defn}
	A strategy profile $\boldsymbol{a}^*$ is said to constitute a \emph{ cognitive hierarchy equilibrium} for the HTA game if and only if:
	\vspace{-0.1 cm}
	\begin{equation}
	\boldsymbol{a}^*_i(k)= \arg \max_{\boldsymbol{a}_i \in \mathcal{S}_{i,g_k}} U_{i}(\boldsymbol{a}_i) \hspace{0.1 cm}, \forall i \in \mathcal{L},
	\vspace{-0.1 cm}
	\end{equation}
	where $k$ is the CH level of player $i$ and $\mathcal{S}_{i,g_k}$ is the strategy space of device $i$ based on its belief $g_k$.
\end{defn}


To find the CHE, MTD $i$ at CH level $k \geq 1$ solves the following problem
\small
\begin{subequations}
\begin{align}
&\min_{\gamma_i}\frac{  P_k b_k}{W\log(1+\gamma_i(k))e^{-\frac{\gamma_i(k) \sigma^2}{\alpha_i^2P_k}}},\\
& \textrm{s.t.} \hspace{0.1 cm}\Big(1-e^{-\frac{\gamma_i(k) \sigma^2}{\alpha_i P_k}}\Big)^{t_k} \leq \epsilon, \\
& \frac{b_k}{T\cdot W\log(1+\gamma_i(k))}\leq \frac{1}{g_k(k) \cdot L} \Big(1-\sum_{h=0}^{k-1}g_k(h)\nonumber\\&\hspace{3 cm}\times\sum_{q \in \mathcal{C}}N_q\frac{b_h}{T\cdot W\log(1+\gamma^*_q(h))}\Big).
\end{align}\label{CHMTDopt2}
\end{subequations}
\normalsize
A level $k$ HTD $i$ solves the following optimization problem
\vspace{0.2 cm}
\small
 \begin{subequations}
\begin{align}
&\max_{\tau_i(k)} \mathbb{E}_{h_i}[W\log(1+\frac{|h_i|^2P_k}{\sigma^2})] \tau_i(k),\\
& \textrm{s.t.}\hspace{0.3 cm} P_k \tau_i(k) \leq \frac{E_i}{ T}, 0 \leq \tau_i(k) \leq 1,\\
& \small\hspace{-0.1 cm} \tau_i(k) \leq \frac{1}{g_k(k) \cdot L}\Big(1-\sum_{h=l+1}^{k-1}g_k(h)\sum_{q \in \mathcal{C}}N_q\tau^*_q(h)\nonumber\\
&\hspace{1 cm}-\sum_{h=0}^{l}g_k(h)\sum_{q \in \mathcal{C}}N_q\frac{b_h}{T\cdot W\log(1+\gamma^*_q(h))}\Big). 
\end{align}\label{CHHTDopt}
\vspace{-0.1 cm}
\end{subequations}

\normalsize
Based on the optimization problems in (\ref{CHMTDopt2}) and  (\ref{CHHTDopt}), to find its optimal CHE strategy, a device $i$ at CH level $k$ needs to compute the actions of the remaining devices for each CH level $h$ ($0 \leq h < k$) based on its belief $g_k$.  Hence, in CH, each device finds its CHE based on its own beliefs about other devices and not through an iterative learning process as in GNE. Thus, the CHE is a stable solution (similar to the stability of a Nash equilibrium) since each player, due to its bounded rationality, has no incentive of deviating to another strategy once it computes its CHE. Therefore, in terms of stability, the CHE provides the same stability as the GNE, but under the more realistic model with bounded rationality. Further, the CHE strategy of a device at level $k$ is a function of the channel quality between the BS and the device. Thus, devices belonging to the same CH level might have different CHE strategies. In the following proposition, we characterize the CHE of the HTA game.
\begin{prop}
\emph{The CHE of the HTA game is the vector $\boldsymbol{a}^*$ such that for MTD $i$ at level $k \geq 1$ the CHE strategy is given by (\ref{propeq}),
			and the CHE strategy of HTD $i$ at level $k$ is \small
			\\
			$a^*_i(k)=\min \Big \{ \frac{E_i}{ T P_i}, \frac{1}{g_k(k) \cdot L}\Big(1-\sum_{h=l+1}^{k-1}g_k(h)\sum_{q \in \mathcal{C}}N_q\tau^*_q(h)-\sum_{h=0}^{l}g_k(h)\sum_{q \in \mathcal{C}}N_q\frac{b_h}{T\cdot W\log(1+\gamma^*_q(h))}\Big) \Big \}.$
		}\normalsize
\end{prop}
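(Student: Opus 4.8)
The plan is to exploit the central structural feature of the CHE definition: once a device fixes its beliefs $g_k$, its optimization in (\ref{CHMTDopt2}) or (\ref{CHHTDopt}) becomes a \emph{single-variable} problem, whose only coupling to the rest of the network enters through the belief-weighted budget on the right-hand side of the time-sharing constraint. Because that budget depends only on the equilibrium strategies $\gamma^*_q(h)$ and $\tau^*_q(h)$ of the strictly lower levels $h<k$, I would construct the CHE inductively on the rationality level $k$. The level-$0$ strategies are fixed by the randomized rule described in the model, and at each subsequent level the lower-level strategies are already determined, so the belief-weighted budget is a known constant and each device's problem collapses to optimizing a one-dimensional function over an interval.

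For an HTD at level $k$, the objective in (\ref{CHHTDopt}) is the expected ergodic rate, which is linear and strictly increasing in $\tau_i(k)$, exactly as observed for the utility (\ref{utilhtd}). Hence the maximizer lies at the upper boundary of the feasible set, and the only active constraints are the energy budget $\tau_i(k)\le E_i/(TP_k)$ and the belief-weighted time-sharing constraint, the box constraint $0\le\tau_i(k)\le 1$ being subsumed. Taking the smaller of these two upper bounds, and using $P_k=P_i$ since same-level devices share a type, yields precisely the claimed $\min$ expression, in direct analogy with Proposition \ref{HTDbest}.

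For an MTD at level $k$, I would first invoke the concavity of the utility in $\gamma_i$ (Lemma~1) together with the stationarity condition (\ref{mtdmax}) (Lemma~2), so that the unconstrained maximizer is the unique $\gamma'_i$. Next I would rewrite the two remaining constraints of (\ref{CHMTDopt2}) as bounds on $\gamma_i$: the latency constraint gives the upper bound $\gamma_{i,\textrm{UB}}$ of (\ref{gammaub}), while the belief-weighted time-sharing constraint, because the time fraction is decreasing in $\gamma_i$, inverts into a \emph{lower} bound equal to the exponential expression appearing in (\ref{propeq}). The feasible set is then an interval, and concavity makes the constrained optimum the projection of $\gamma'_i$ onto it, producing the three-case formula (\ref{propeq}) in the same manner as Proposition \ref{MTDbest}.

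The main obstacle is not the per-device optimization, which is routine once Lemmas~1 and~2 are in hand, but the self-referential nature of the belief-weighted budget: the constraint for a level-$k$ device involves the \emph{same-level} contribution through the factor $1/(g_k(k)L)$, and is only well posed under the modeling assumption that all same-level, same-type devices select identical actions. I would therefore need to justify that this symmetry assumption makes the per-level fixed point consistent, so that every level-$k$ device, solving the identical problem, reproduces the common strategy it assumed, and to verify that the feasible interval is nonempty, i.e. that the lower bound does not exceed $\gamma_{i,\textrm{UB}}$, so that the CHE strategy is well defined at each level. With these consistency and feasibility checks established, the inductive construction delivers the stated characterization.
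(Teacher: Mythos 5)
Your proposal is correct and follows essentially the same route as the paper: the MTD case is obtained by projecting the unconstrained maximizer $\gamma'_i$ (Lemmas 1 and 2) onto the belief-constrained feasible interval, exactly as in Proposition \ref{propGNEMTD}, and the HTD case follows from monotonicity of the utility, so the optimum is the upper bound of the belief-constrained strategy space, as in Proposition \ref{propGNEHTD}. The ``self-referential'' obstacle you flag is resolved in the paper by the stated modeling assumption that each device believes all same-level devices choose its own action, so the $1/(g_k(k)\cdot L)$ factor makes the level-$k$ budget depend only on lower-level equilibrium strategies and no additional fixed-point argument is required.
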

\newcounter{tempEquationCounter1} 
\newcounter{thisEquationNumber1}
\newenvironment{floatEq1}
{\setcounter{thisEquationNumber1}{\value{equation}}\addtocounter{equation}{1}
\begin{figure*}[!t]
\normalsize\setcounter{tempEquationCounter1}{\value{equation}}
\setcounter{equation}{\value{thisEquationNumber1}}
}
{\setcounter{equation}{\value{tempEquationCounter1}}
\hrulefill\vspace*{4pt}
\end{figure*}
}
\begin{floatEq1}
\begin{equation}
a^*_i(k)=\left\{
			\begin{array}{ll}
			\gamma_{i,\textrm{UB}}, \hspace{0.2 cm}\textrm{if } \gamma'_i\geq \gamma_{i,\textrm{UB}},\\
			\gamma'_i,   \textrm{if} \hspace{0.1 cm} e^{\frac{b_k}{\frac{T\cdot W}{g_k(k)} \Big(1-\sum_{h=0}^{k-1}g_k(h)\sum_{q \in \mathcal{C}}N_q\frac{b_h}{T\cdot W\log(1+\gamma^*_q(h))}\Big)}}-1\leq \gamma'_i \leq \gamma_{i,\textrm{UB}},\\
			e^{\frac{b_k}{\frac{T\cdot W}{g_k(k) \cdot L} \Big(1-\sum_{h=0}^{k-1}g_k(h)\sum_{q \in \mathcal{C}}N_q\frac{b_h}{T\cdot W\log(1+\gamma^*_q(h))}\Big)}}-1, \hspace{0.1 cm} \textrm{otherwise}.
			\end{array} \right.
\label{propeq}
\end{equation}
\end{floatEq1}
\begin{proof}
The proof is in Appendix A.
\end{proof}
Given this characterization, we can compute the computational complexity of finding the CHE at each CH level.
\begin{thm}
\emph{The complexity of the optimization done at device of level $k$ to find the CHE is $O((k+1)\cdot C)$ where $C$ is the size of the quantized channel variance set $\mathcal{C}$}.\label{propCHcomplexity}
\end{thm}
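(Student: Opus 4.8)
The plan is to reduce the computation at a level-$k$ device to repeated evaluations of the closed-form best responses established in the preceding proposition, carried out bottom-up over the levels $0,1,\dots,k$, and then to show that quantizing the channel variances caps the work at each level at $O(C)$. First I would isolate the only quantity in the CHE characterization (\ref{propeq}) (and its HTD counterpart) that is not a fixed device parameter: the aggregate $\Sigma_k=\sum_{h=0}^{k-1}g_k(h)\sum_{q\in\mathcal{C}}N_q\frac{b_h}{T\cdot W\log(1+\gamma^*_q(h))}$ together with its HTD analogue. The quantities $\gamma'_q$ defined by (\ref{mtdmax}) and $\gamma_{q,\textrm{UB}}$ depend only on the device parameters $(\alpha_q^2,P_q,\dots)$ and not on the level or on lower-level play, so they are tabulated once for the $C$ quantized values in $O(C)$ time; once $\Sigma_k$ is available, the level-dependent lower bound in (\ref{propeq}) and hence $a^*_i(k)$ follow in $O(1)$.

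Next I would observe that forming $\Sigma_k$ requires the equilibrium strategies $\gamma^*_q(h)$ (resp.\ $\tau^*_q(h)$) for every level $h<k$ and every $q\in\mathcal{C}$, and that these are generated recursively by the same characterization applied from level $0$ upward. At level $0$ the $C$ strategies come directly from the prescribed level-$0$ distribution in $O(C)$. The key bookkeeping is, for each completed level $h$, to store only the single scalar $S_h=\sum_{q\in\mathcal{C}}N_q\frac{b_h}{T\cdot W\log(1+\gamma^*_q(h))}$ and to maintain the running sums $F_h=\sum_{i\le h}f(i)$ and $P_h=\sum_{i\le h}f(i)S_i$. Because $g_k(h)$ is $f(h)$ normalized by a prefix sum of the Poisson mass, the lower-level aggregate entering each best response is the ratio $P_{h-1}/F_{h-1}$, recovered in $O(1)$, and both running sums are updated in $O(1)$ per level.

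With this bookkeeping the cost of level $h$ is $O(1)$ to form its aggregate, $O(C)$ to evaluate the $C$ per-channel best responses through the closed forms, and $O(C)$ to accumulate $S_h$---that is, $O(C)$ per level. Summing over the $k+1$ levels $0,1,\dots,k$ that a level-$k$ device must traverse yields the claimed bound $O((k+1)\cdot C)$.

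I expect the accounting in this last step to be the main obstacle. A naive evaluation of the nested double sums in (\ref{expectedmq}) and (\ref{expectedhq}) costs $O(hC)$ at level $h$ and hence $O(k^2C)$ overall, so the linear-in-$k$ bound depends critically on computing the $g_k$-weighted lower-level totals incrementally through the prefix sums $F_h,P_h$ rather than from scratch, and on the observation that $\gamma'_q$ and $\gamma_{q,\textrm{UB}}$ are level-independent and need only be computed once. It also remains to confirm that every individual best-response evaluation---including solving the one-dimensional equation (\ref{mtdmax}) for $\gamma'_q$---is genuinely $O(1)$, in the sense of not scaling with $L$, $k$, or $C$.
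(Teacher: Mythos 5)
Your proposal follows essentially the same route as the paper's proof: the level-$k$ device recursively evaluates the closed-form CHE strategies bottom-up over levels $0,1,\dots,k$, with $C$ quantized channel values per level at $O(1)$ per evaluation (the paper invokes Propositions 4 and 5 for this), giving $O((k+1)\cdot C)$. Your prefix-sum bookkeeping for the $g_k$-weighted aggregates is a welcome refinement that makes rigorous an accounting step the paper's one-line proof silently absorbs, but it does not change the underlying argument.
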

\vspace{-0.3 cm}
\begin{proof}
The proof is in Appendix B.
\end{proof}


Theorem \ref{propCHcomplexity} first shows that the process of finding the CHE at each CH level $k$ is guaranteed to converge. Further, Theorem 3 shows that the computational complexity of the optimization done by each device at CH level $k$ to find the CHE is linear in the CH level while the computational complexity of the GNE learning algorithm is polynomial in the number of devices $L$. Since $k << L$,  Theorem \ref{propCHcomplexity} validates that the CH proposed approach is more suitable for IoT devices that are of heterogeneous computational capabilities and typically resource constrained.
	
	
	
In the proposed CH approach, all devices obtain their CHE strategy based on their beliefs. This is reflected in the constraint $\mathbb{E}_{g_k}\Big[\sum_{i \in \mathcal{L}} \tau_i\Big]=1$. Hence, the resulting sum of time fractions of all devices at CHE may not be equal to 1. However, when $\sum_{i \in \mathcal{L}}\tau_{i,\textrm{CHE}}\geq 1$, the time fractions of all devices must be normalized. The sum of time fractions at CHE is given by
\vspace{-0.1 cm}
\begin{equation}
\sum_{i \in \mathcal{L}}\tau_{i,\textrm{CHE}}=\sum_{i \in \mathcal{M}}\frac{b_i}{ T\cdot W\log(1+a^*_i(h_i))}+\sum_{i \in \mathcal{H}}a^*_i(h_i)
\end{equation}
\vspace{-0.1 cm}
where $\tau_{i,\textrm{CHE}}$ is the value of time fraction of device $i$ at CHE and $h_i$ is the CH level of device $i$.

Let $\gamma_{g_k}=e^{\frac{b_k}{\frac{T\cdot W}{g_k(k)} \Big(1-\sum_{h=0}^{k-1}g_k(h)\sum_{q \in \mathcal{C}}N_q\frac{b_h}{T\cdot W\log(1+\gamma^*_q(h))}\Big)}}-1$.

 The normalized time fraction for each MTD $i$ will be
\small
 \begin{equation}	\hspace{-0.6 cm}\nu_i=\left\{
	\begin{array}{ll}
	\frac{\frac{b_i}{ T\cdot W\log(1+\gamma_{i,\textrm{UB}})}}{\sum_{i \in \mathcal{M}}\frac{b_i}{ T\cdot W\log(1+a^*_i(h_i))}+\sum_{i \in \mathcal{H}}a^*_i(h_i)}, \hspace{0.2 cm}\textrm{if } \gamma'_i\geq \gamma_{i,\textrm{UB}},\\
\frac{\frac{b_i}{ T\cdot W\log(1+\gamma'_{i})}}{\sum_{i \in \mathcal{M}}\frac{b_i}{ T\cdot W\log(1+a^*_i(h_i))}+\sum_{i \in \mathcal{H}}a^*_i(h_i)},  \hspace{0.2 cm}    \textrm{if} \hspace{0.2 cm} \gamma_{g_k}\leq \gamma'_i \leq \gamma_{i,\textrm{UB}},\\
	\frac{\frac{b_k}{\frac{T\cdot W}{g_k(k) \cdot L} \Big(1-\sum_{h=0}^{k-1}g_k(h)\sum_{q \in \mathcal{C}}N_q\frac{b_h}{T\cdot W\log(1+\gamma^*_q(h))}\Big)}}{\sum_{i \in \mathcal{M}}\frac{b_i}{ T\cdot W\log(1+a^*_i(h_i))}+\sum_{i \in \mathcal{H}}a^*_i(h_i)} \hspace{0.2 cm} \textrm{otherwise}.
	\end{array} \right. 
\end{equation}
\normalsize	
The normalized time fraction for each HTD $i$ is given by (25).


%



\newcounter{tempEquationCounter} 
\newcounter{thisEquationNumber}
\newenvironment{floatEq}
{\setcounter{thisEquationNumber}{\value{equation}}\addtocounter{equation}{1}
\begin{figure*}[!t]
\normalsize\setcounter{tempEquationCounter}{\value{equation}}
\setcounter{equation}{\value{thisEquationNumber}}
}
{\setcounter{equation}{\value{tempEquationCounter}}
\hrulefill\vspace*{4pt}
\end{figure*}

}

\begin{floatEq}
\begin{equation}
\hspace{-0.5 cm}\nu_i=\frac{\min \Big \{ \frac{E}{ T P_i}, \frac{1}{g_k(k) \cdot L}\Big(1-\sum_{h=l+1}^{k-1}g_k(h)\sum_{q \in \mathcal{C}}\tau^*_q(h)-\sum_{h=0}^{l}g_k(h)\sum_{q \in \mathcal{C}}N_q\frac{b_h}{T\cdot W\log(1+\gamma^*_q(h))}\Big) \Big \}}{\frac{b_i}{ T\cdot W\log(1+a^*_i(h_i))}+\sum_{i \in \mathcal{H}}a^*_i(h_i)}. 
\label{equ:floatedEquation}
\end{equation}
\end{floatEq}

We define $\gamma_{\nu_i}$ to be the SNR threshold for device $i$ when the time fraction of device $i$ is $\nu_i$.

In practice, for the first two transmission time slots, the devices transmit according to their computed CHE time fraction. In our proposed scheme, devices transmit in a pre-determined order, e.g. device 1, device 2, until device $L$. Since the IoT devices are densely deployed, device $i+1$ can hear and decode the packet transmitted by device $i$ to the BS. Device $i+1$ can detect the signal transmitted by device $i$ either by carrier sensing or by energy detection, i.e. if the energy of the transmitted signal exceeds a predefined threshold, as done for example in CSMA protocols \cite{CSMA}. Also, device $i+1$ can decode the packet and determine from its header if the packet is transmitted by device $i$. Thus, device $i+1$ starts transmitting as soon as device $i$  completes its transmission. Then, each device records the times $t_{i,1}$ and  $t_{i,2}$ during which it accesses the channel in the first and second time slot, respectively.  Finally, each device $i$ computes the time duration $T'$ as the difference $t_{i,2}-t_{i,1}$. Finally, device $i$ normalizes its time fraction as $\nu_i=\frac{\tau_{i,CHE}\cdot T}{t_{i,2}-t_{i,1}}$. 

In the proposed normalization process,  each device computes its normalized time fraction based on the transmission times of the remaining IoT devices. Thus, transmission is not interrupted during the process of normalization. Further, there are no extra control messages required to perform the normalization. Such a normalization scheme does not require information exchange among the devices, and, thus, it does not result in signaling overhead. Hence, it is suitable for dynamic networks where devices join or leave the network.

In the following proposition, we compare the performance of the CHE and GNE strategy profiles of the HTA game. We show the conditions under which the CHE strategy can constitute also a GNE strategy, and show that in the other cases (i.e. when the CHE strategy profile does not constitute a GNE) that there always exists at least a GNE that yields a better performance, in terms of the total system utility. Naturally, such a performance improvement stems from the fact that the GNE requires more information and more computations per device. 

\begin{thm}
	\emph{Let $\boldsymbol{d}=(d_1,d_2,...,d_{L})$ be a strategy profile corresponding to the normalized CHE time fractions with $d_i=\nu_i$ if device $i$ is an HTD and $d_i=\gamma_{\nu_i}$ if device $i$ is an MTD 
		\begin{itemize}
		\item If $\sum_{i \in \mathcal{L}}\tau_{i,\textrm{CHE}} \geq 1$, the strategy vector $\boldsymbol{d}$ is a GNE strategy of the HTA game if for every level-0 device $i$, its CHE strategy satisfies  $a^*_i(0) \geq \min\{\gamma'_i,\gamma_{i,\textrm{UB}}\}$.
		\item If $\sum_{i \in \mathcal{L}}\tau_{i,\textrm{CHE}} < 1$, the CHE strategy vector of the HTA game is a GNE if
		\begin{itemize}\item $\sum_{i \in \mathcal{M}'}\frac{b_i}{T\cdot W \log(1+\gamma'_i)}+\sum_{i \in \mathcal{H}}\frac{E_i}{ T\cdot WP_i}+ \sum_{i \in \mathcal{M}-\mathcal{M}'}\frac{b_i}{T\cdot W \log(1+\gamma_{i,\textrm{UB}})} \leq 1$,
             \item The CHE equilibrium strategy for each MTD $i$ is $a^*_i(k_i)=\gamma'_{i}$ if $i \in \mathcal{M}'$ and $a^*_i(k_i)=\gamma_{i,\textrm{UB}}$ if $i \in \mathcal{M}-\mathcal{M'}$ $\forall k_i \geq 0$.
             \item The CHE equilibrium strategy for each HTD $i$ is $a^*_i(k_i)=\frac{E_i}{ T\cdot WP_i}$.
		\end{itemize}
		\hspace{0.2 cm}Otherwise, there exists at least one GNE that yields a better performance than the CHE, in terms of the total system utility.
		\end{itemize}}
\label{theoremcompare}\end{thm}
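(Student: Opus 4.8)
The plan is to reduce the statement to a direct verification of the best-response conditions in the GNE definition, using the best-response characterizations of Propositions~\ref{HTDbest} and~\ref{MTDbest} together with the GNE description of Theorem~\ref{GNEthm}. Recall that a feasible profile is a GNE precisely when every device maximizes its own utility over its mutually-constrained strategy set; since each HTD utility is increasing in $\tau_i$ and each MTD utility is concave in $\gamma_i$ with global maximizer $\gamma'_i$ (by the two lemmas and~(\ref{mtdmax})), checking the GNE property amounts to comparing the played action against the interval endpoints induced by the opponents' choices. I would treat the two regimes $\sum_{i\in\mathcal{L}}\tau_{i,\textrm{CHE}}\ge 1$ and $<1$ separately.

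For the first regime I would first observe that after normalization $\sum_{i\in\mathcal{L}}\nu_i=1$, so for every device the time-sharing constraint is binding: the opponents leave device $i$ exactly the budget $\nu_i$. For an HTD this means its feasible upper bound on $\tau_i$ equals $\nu_i$, and since $\nu_i\le\tau_{i,\textrm{CHE}}\le E_i/(TP_i)$ (normalization only shrinks the fractions) its increasing utility is maximized at $\nu_i=d_i$, so it best-responds. For an MTD the feasible set becomes $[\gamma_{\nu_i},\gamma_{i,\textrm{UB}}]$ with $\gamma_{\nu_i}\ge a^*_i(h_i)$ (shrinking the time fraction raises the threshold via~(\ref{thres})). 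By~(\ref{propeq}) every MTD of level $k\ge 1$ already satisfies $a^*_i(k)\ge\min\{\gamma'_i,\gamma_{i,\textrm{UB}}\}$, and the hypothesis extends this to the level-$0$ MTDs; hence $\gamma_{\nu_i}\ge\min\{\gamma'_i,\gamma_{i,\textrm{UB}}\}$, which forces $\gamma'_i\le\gamma_{\nu_i}$ on the feasible interval. The concave utility is therefore non-increasing on $[\gamma_{\nu_i},\gamma_{i,\textrm{UB}}]$, so $d_i=\gamma_{\nu_i}$ is the maximizer and $\boldsymbol{d}$ satisfies every best-response condition, i.e. it is a GNE.

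For the second regime ($\sum_{i\in\mathcal{L}}\tau_{i,\textrm{CHE}}<1$) no normalization is applied, so the candidate profile is the raw CHE. Under the listed hypotheses — the uniqueness inequality of Theorem~\ref{GNEthm} holds, every MTD plays $\gamma'_i$ (if $i\in\mathcal{M}'$) or $\gamma_{i,\textrm{UB}}$ (if $i\in\mathcal{M}-\mathcal{M}'$) at all levels, and every HTD plays $E_i/(TP_i)$ — I would simply match these values against the unique-GNE strategies given explicitly in the first bullet of Theorem~\ref{GNEthm}. They coincide termwise, and the uniqueness inequality is exactly the feasibility condition, so the CHE profile is the unique GNE.

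The delicate part, and the one I expect to be the main obstacle, is the ``otherwise'' claim: when the above fails one must exhibit a GNE with strictly larger total utility. The clean case is when the uniqueness inequality of Theorem~\ref{GNEthm} still holds: there the unique GNE places every device at its \emph{unconstrained} individual optimum ($\gamma'_i$ or $\gamma_{i,\textrm{UB}}$ for MTDs, $E_i/(TP_i)$ for HTDs), which is a per-device upper bound on utility, so that GNE dominates any non-coincident CHE termwise and hence in total utility. The harder situation is the non-unique regime, where one must work with the GNE set $\mathcal{N}=\cup_j\mathcal{N}_j$ of Theorem~\ref{GNEthm}: here I would select the member of $\mathcal{N}$ maximizing the aggregate utility and argue that, because the belief-based CHE under-utilizes the shared time ($\sum_{i\in\mathcal{L}}\tau_{i,\textrm{CHE}}<1$ while any GNE saturates the constraint), reallocating the slack to the constrained devices (more time for HTDs, thresholds pushed toward $\gamma'_i$ for MTDs) strictly raises the total utility without violating feasibility. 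Making this dominance precise across all $\mathcal{N}_j$, while respecting the delay caps $\gamma_{i,\textrm{UB}}$ and the energy caps $E_i/(TP_i)$, is the technical crux; everything else reduces to the endpoint comparisons above.
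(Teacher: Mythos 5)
Your overall route is the same as the paper's: split on $\sum_{i\in\mathcal{L}}\tau_{i,\textrm{CHE}}\gtrless 1$, verify the endpoint/best-response conditions of Propositions 2 and 3 under the saturated time budget in the first regime, match the CHE termwise against the unique GNE of Theorem 1 in the second regime, and use termwise domination (each device's utility depends only on its own action) for the ``otherwise'' clause. However, two genuine gaps remain. First, in the regime $\sum_{i\in\mathcal{L}}\tau_{i,\textrm{CHE}}\ge 1$, your single implication ``$\gamma_{\nu_i}\ge\min\{\gamma'_i,\gamma_{i,\textrm{UB}}\}$ forces $\gamma'_i\le\gamma_{\nu_i}$'' is valid only when the minimum equals $\gamma'_i$, i.e., for $i\in\mathcal{M}'$. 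For an MTD with $\gamma'_i>\gamma_{i,\textrm{UB}}$ (i.e., $i\in\mathcal{M}-\mathcal{M}'$) one can have $\gamma_{i,\textrm{UB}}\le\gamma_{\nu_i}<\gamma'_i$; then the interval $[\gamma_{\nu_i},\gamma_{i,\textrm{UB}}]$ you call the feasible set is empty and your concavity conclusion says nothing. The paper treats this subcase separately: there the normalized fraction $\nu_i$ falls below the smallest delay-feasible time fraction, the device cannot enlarge its fraction because the normalized fractions already sum to one, and it is argued to have no incentive to shrink it further. Your unified argument silently covers only $\mathcal{M}'$ and needs this extra case.

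Second, the part you defer as ``the technical crux'' (the non-unique regime under $\sum_{i\in\mathcal{L}}\tau_{i,\textrm{CHE}}<1$) does not require selecting the aggregate-utility-maximizing element of $\mathcal{N}$ and proving dominance over all $\mathcal{N}_j$; the paper's argument is more direct. Since every GNE saturates the time budget (Theorem 1) while the raw CHE does not, the CHE is not a GNE, and one lets devices improve greedily from the CHE profile: MTDs in $\mathcal{M}'$ move toward $\gamma'_i$, MTDs in $\mathcal{M}-\mathcal{M}'$ toward $\gamma_{i,\textrm{UB}}$, and HTDs raise $\tau_i$ toward $E_i/(TP_i)$, until the fractions sum to one. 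Because each utility depends only on the device's own action, every device weakly improves along the way, the terminal profile is a GNE by Theorem 1, and it dominates the CHE in total utility. With that substitution, and the extra subcase above, your plan coincides with the paper's proof.
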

\vspace{-0.5 cm}
\begin{proof}
The proof is in Appendix B.
\end{proof}
Theorem \ref{theoremcompare} provides the conditions under which the strategy profile $\boldsymbol{d}$ corresponding to the normalized CHE time fractions constitutes a GNE strategy. Hence, the devices do not lose in performance by having limited information and computational capabilities. However, if these conditions are not met, the CHE strategy does not constitute, in most cases, a GNE strategy, and the performance is degraded due to bounded rationality. 
\vspace{-0.5 cm}
\subsection{Signaling Overhead}

The communication overhead is specifically evaluated in terms of the total bits exchanged to reach the CHE and GNE solutions, respectively. In the proposed CH scheme, the BS needs to broadcast one packet to its associated devices whenever devices enter or leave the network (e.g., during handshaking or device registration). The broadcasted packet includes the updated network information necessary for the devices to compute the CHE. Thus, the broadcasted packet will include the number of devices $N$, the quantized channel variance set $\mathcal{C}$, and the proportion of devices $f_q$ having quantized channel value $q \in \mathcal{C}$. The BS acquires the CSI of each device upon registration to the network. Further, the packet includes the parameters' values and QoS constraints of each type of devices. For each type $h$ of MTDs, the packet includes the packet size $b_h$ and the time constraint $t_h$. For each type $h$ of HTDs, the packet includes the energy constraint $E_h$.

Thus, the size of the broadcasted packet depends on the number of bits required to represent each of the aformentioned parameters. Concerning the number of devices $N$, the number of bits required $B_N$ depends on $N_{\max}$, the maximum number of devices that the BS can serve. Assuming that $N_{\max}$ is 10,000, the required number of bits will be  $B_N=14$. As for the channel information and considering a static setting, the channel variance value $\alpha_q$ is typically less than one, and the number of required bits $B_{\alpha}$ is 4 bits assuming that the fractional part of the quantized variance value is represented by 1 digit. Also, the proportion of devices having a quantized variance value $\alpha_q$ can be represented by $B_f=7$ assuming that the fractional part of each proportion is captured by 2 digits. Next, for each type of MTDs, the number of required bits for the packet size is $B_b=10$ since MTDs usually have small packet sizes ($<$ 1,000 bits). Assuming that the maximum delay is $1,000$ ms and that the time slot duration is $1$ ms \cite{IEEEpacket, IEEEpacket2}, the number of bits $B_t$ needed to represent the time constraint of each type of MTDs is $B_t=10$. For HTDs, assuming that the energy upper bound $E_h$ (in $\mu$ J) is represented by two digits, the number of bits required will be $B_E=7$. Hence, the number of bits to represent each type of MTD is $B_M=B_b+B_t=20$ and for each type of HTD is $B_H=B_E=7$. The size of the packet (in bytes) is 
$M=\frac{14C + 7 N_H+ 20 N_M}{8}$ where $N_H$ and $N_H$ are the number of types of HTDs and MTDs respectively.
For a network with two types of MTDs and one type of HTDs and when $C=5$, the size of the packet will be $15$ bytes. The typical packet size of an MTD with time critical application [23] ranges from $40$ to $1,000$ bytes. Thus, the proposed CH scheme does not incur significant overhead.

As for the GNE, the overhead stems mainly from the actions exchanged among the devices to obtain the GNE.
In Theorem 2, it is shown that, in the worst case, three rounds of communication are required among the devices to converge to the GNE. Assuming that the fractional part of the time fraction is represented by $3$ digits, the number of bits required to represent the time fraction is $10$. Thus, the total number of bits exchanged to reach the GNE is $30 \cdot L$ . When the number of devices is 1,000, the total number of bits exchanged will be $30,000$. Hence, the overhead of the GNE solution is significantly higher than the proposed CHE.

\vspace{-0.3 cm}
\section{Simulation Results and Analysis}

In our simulations, we set the bandwidth $W$ to $100$ MHz, the time period $T$ to $3$ ms, the noise variance $\sigma^2$ to $-90.8$ dBm, and $\epsilon$ to $0.0001$. The value of the time period is chosen to be small enough to be suitable for IoT devices that have ultra low latency requirements, yet adequate to accomodate the minimum time requirements for an IoT with massive number of devices. We consider three types of devices: the first two are MTDs while the third is HTD. MTD type 1 devices represent MTDs with strict latency requirements such as e-healthcare sensors while MTD type 2 devices represent MTDs with relaxed delay constraints such as smart meters. Thus, the packet size for the MTD types 1 and 2 are set 128 and 50 bytes respectively, and the latency constraint for MTD types 1 and 2 are set to be $5$ ms and $1$ s respectively in line with the guidelines of \cite{IEEEpacket, IEEEpacket2, m2mpacket, 5Gpaper}. The transmission powers for MTD types 1 and 2 and for HTDs are 0.1, 0.1, and 0.5 W respectively. The variance of the Rayleigh fading distribution of all channels is $\alpha^2=0.1$. The rate of the Poisson distribution of devices over the CH levels is $\tau=1$. This value implies that the proportion of devices in each CH decreases with the CH levels, and the probability that a device belongs to a CH level higher than 3 is negligible. Thus, it is suitable for our system because the maximum number of CH levels considered is 3.

First, we assess the performance of the GNE solution of the HTA game by computing the \emph{price of anarchy (PoA)}  \cite{gametheory} as follows.
\vspace{-0.6 cm}
\subsection{Price of Anarchy of the GNE}
The PoA is the ratio between the optimal centralized solution that maximizes total utility, and the minimum total utility that can be possibly achieved by a GNE strategy. Thus, the PoA is a measure of how much the performance of the system degrades when the GNE is used. In our problem, the utility of HTDs is expressed in terms of the achieved rate whereas the utility of each MTD is expressed in terms of the energy cost. Hence, in order to have a better evaluation of the performance of MTDs and HTDs, we compute separate PoAs for the MTDs and for the HTDs. The PoA of MTDs is defined as the ratio of the maximum energy consumed that can be possibly achieved by a GNE and the minimum possible energy consumed by MTDs. The PoA of HTDs, on the other hand, is defined as the ratio of maximum total rate of HTDs and the minimum total rate that can be possibly achieved by a GNE solution. 
%
%
\begin{figure}[t]
\minipage{0.45\textwidth}
  \includegraphics[width=8 cm,height=5cm,angle=0]{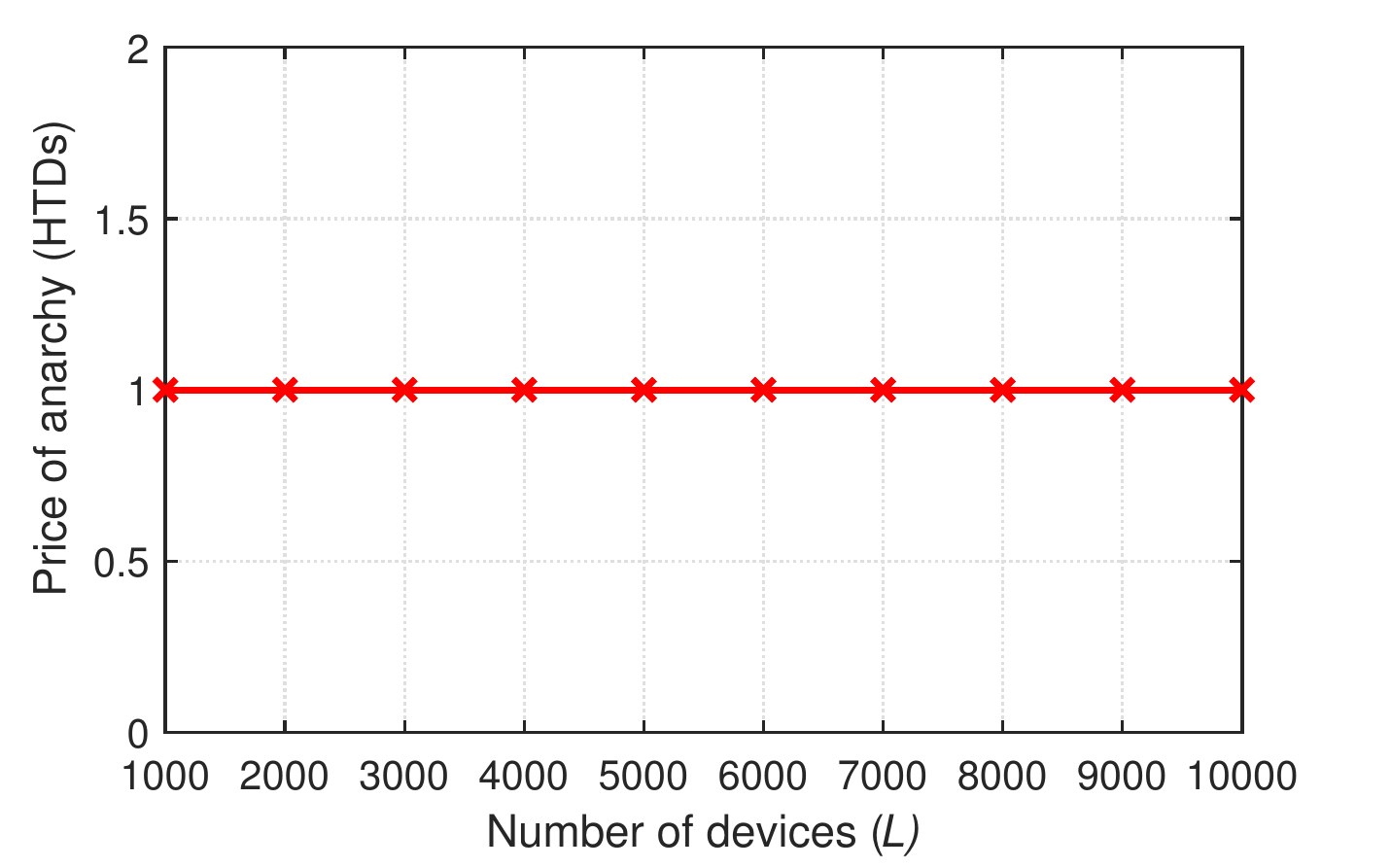}
  \caption{Price of anarchy of HTDs vs. number of devices.}\label{PoAh}
\endminipage \hspace{1 cm}
\minipage{0.45\textwidth}
  \includegraphics[width=8 cm,height=5cm,angle=0]{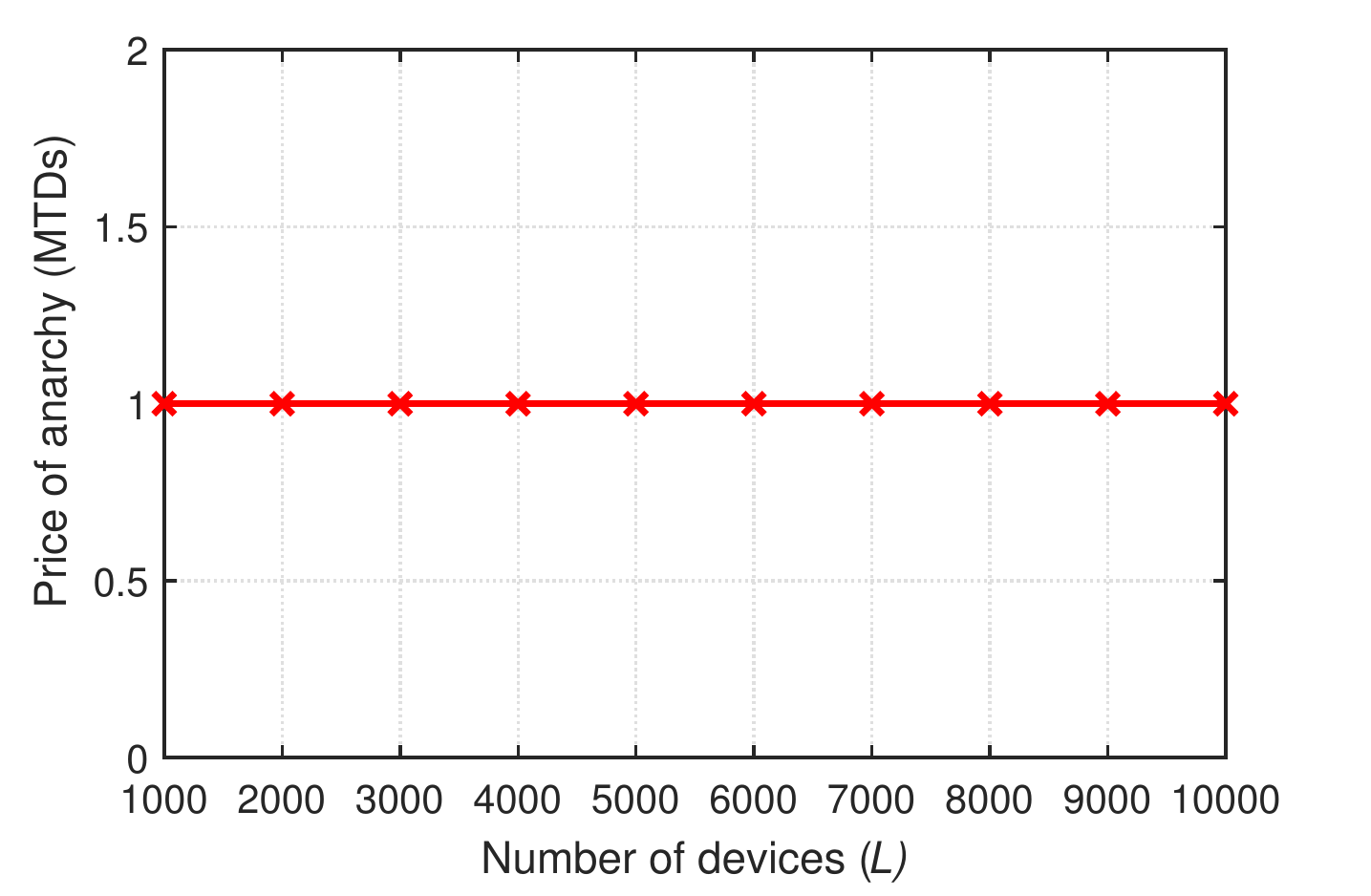}
  \caption{Price of anarchy of MTDs vs. number of devices.}\label{PoAm}
\endminipage
\vspace{-0.8 cm}
\end{figure}
Fig. \ref{PoAh} shows the PoA of HTDs as a function of the network size. As shown in Fig. \ref{PoAh}, the PoA of HTDs is one for the considered network sizes since the GNE is unique in this case. Fig. \ref{PoAm} shows the PoA of MTDs as a function of the network size. Again, the PoA of MTDs is also $1$ for the considered network sizes since the GNE is unique. Thus, Figs. \ref{PoAh} and  \ref{PoAm} show that the GNE solution provides a stable performance for both MTDs and HTDs.
\vspace{-0.4 cm}

\subsection{CHE Solution Evaluation}

For the considered simulation values, we compute the CHE solution when the network size varies between 1,000 and 10,000 in steps of 1,000.  Since level-0 devices choose their time fraction randomly, 1,000 samples of level-0 devices time fractions are generated, and the total sum of CHE time fractions is computed. Then, the average sum CHE time fraction as well as the average time fraction of each MTD type 2 and each HTD are computed.  We consider two cases: A first case in which the mean $\mu$ of the distribution of time fractions of level-0 devices is $2\tau_{0,LB}$ where $\tau_{0,LB}$ is lower bound on the time fractions of level-0 devices and a second case with $\mu=3\tau_{LB}$. Fig. \ref{sumtau} shows  the average sum of CHE time fractions versus the network size. As shown in Fig.  \ref{sumtau}, the average sum of CHE time fractions increases with the network size where it exceeds one for network sizes greater than $8,000$. When $\mu$ increases to $3\tau_{0,LB}$, the sum of CHE time fractions increases for each considered network. This is because, as $\mu$ increases, level-0 devices have a higher chance of transmitting with larger time fractions, which eventually increases the sum of CHE time fractions.

\begin{figure}[t]
	\centering
	\includegraphics[width=8 cm,height=5cm,angle=0]{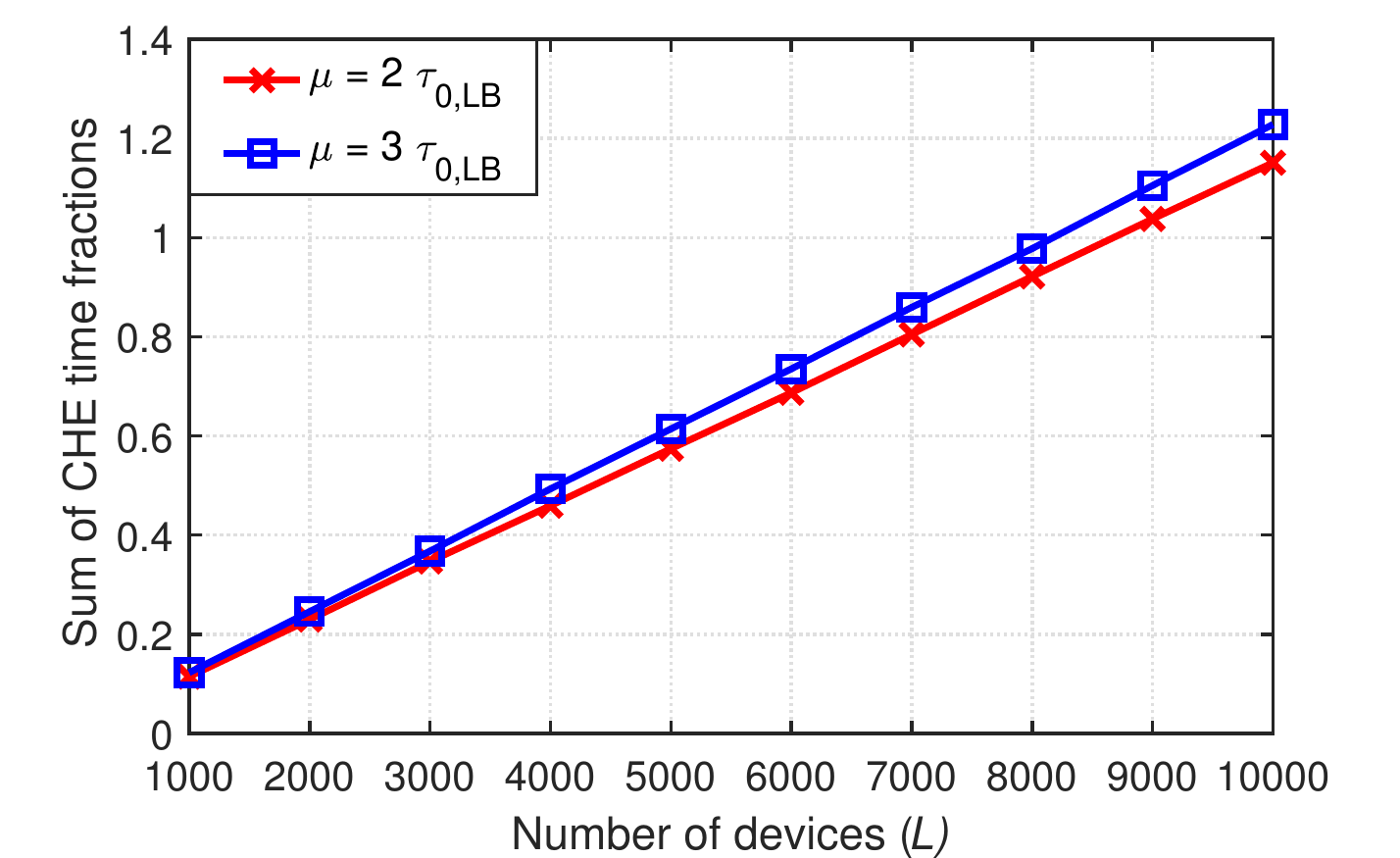}
	\caption{Sum of CHE time fractions vs. the number of devices for different values of the mean of the distribution of  the level-0 devices' time fractions.
	}\vspace{-0.7 cm}\label{sumtau}
\end{figure}


%

\begin{figure}[t]
\minipage{0.45\textwidth}
  \includegraphics[width=8 cm,height=5cm,angle=0]{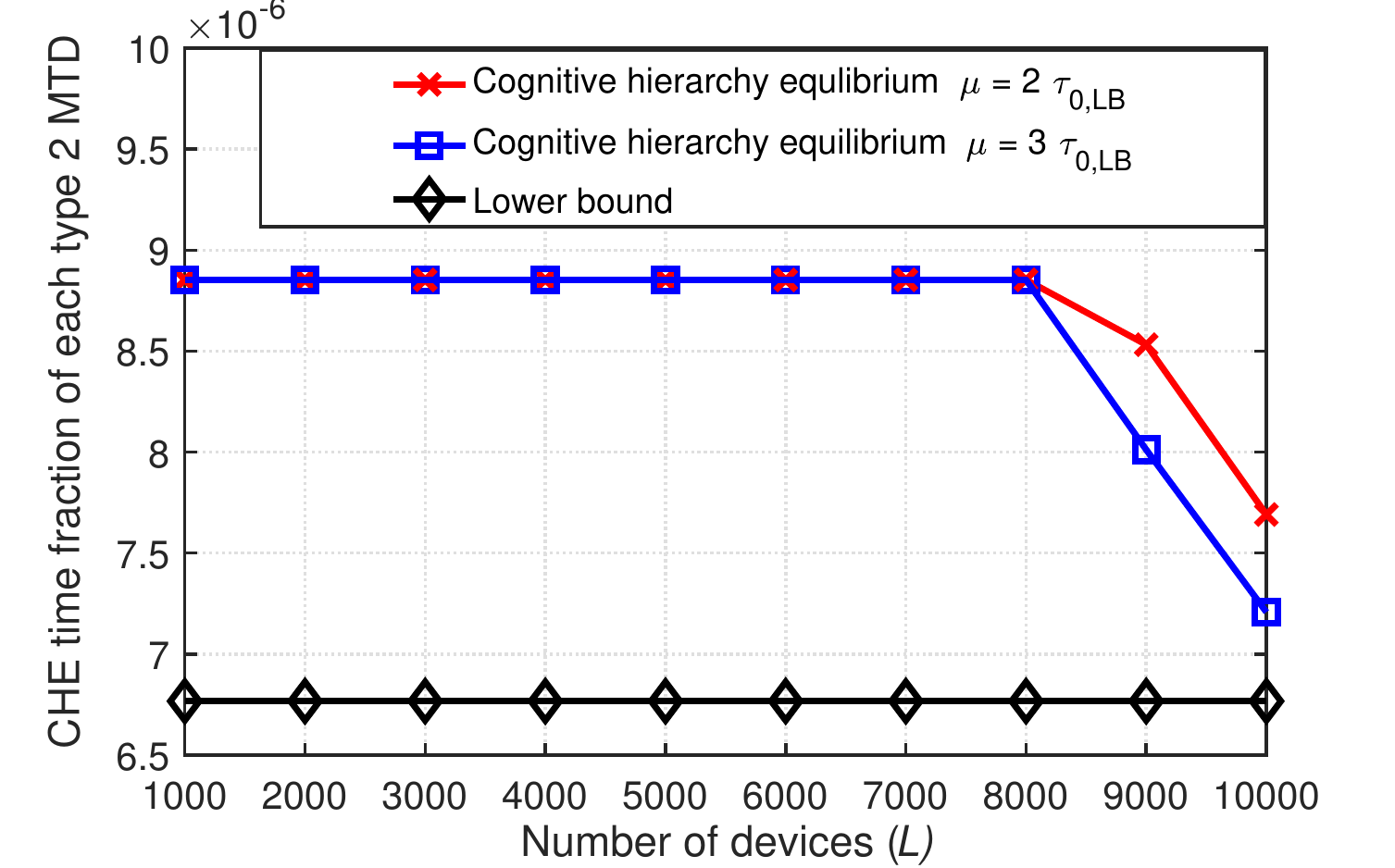}
  \caption{Normalized CHE time fraction of type 2 MTDs vs. number of devices.}\label{taum2ch}
\endminipage \hspace{1 cm}
\minipage{0.45\textwidth}
  \includegraphics[width=8 cm,height=5cm,angle=0]{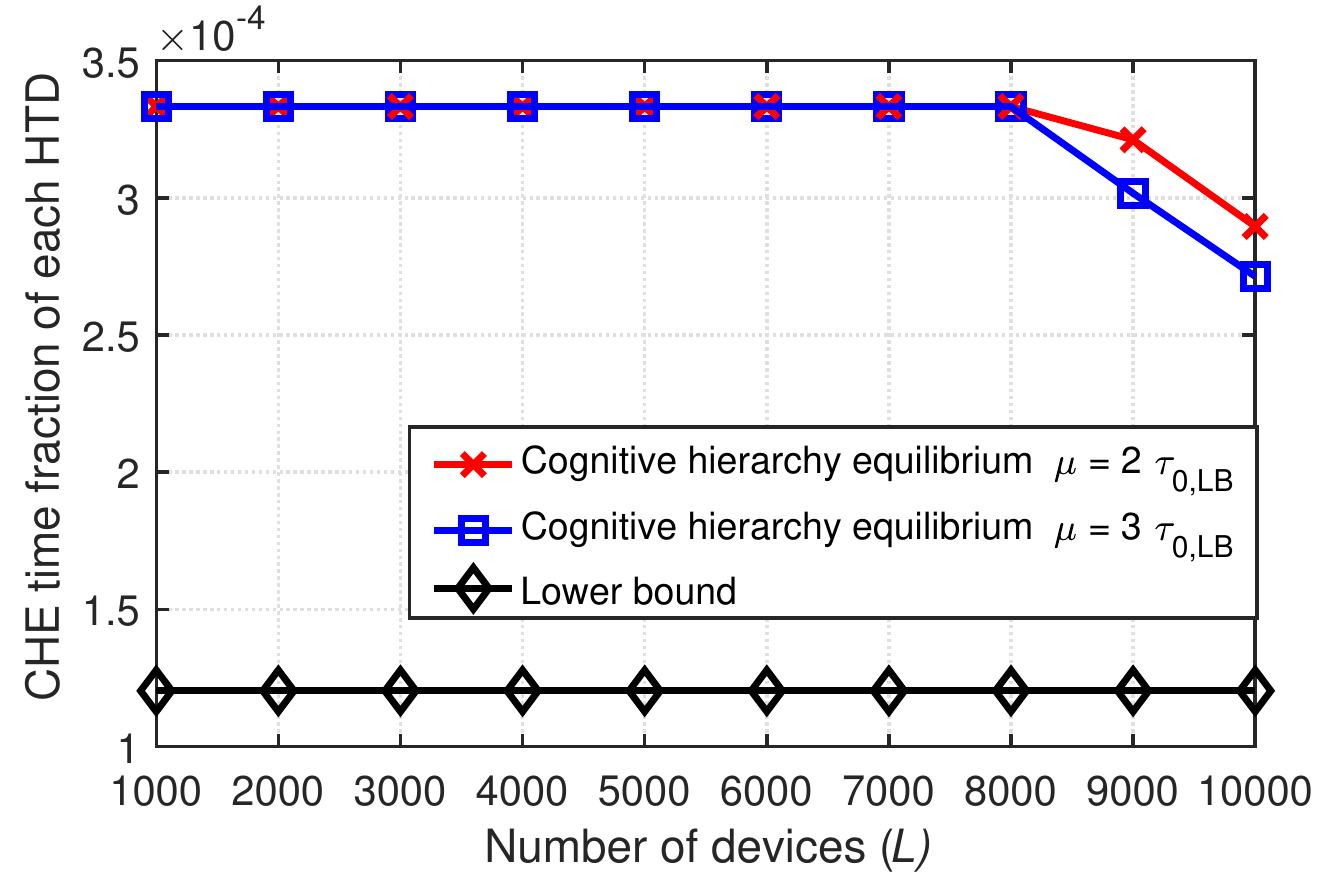}
  \caption{Normalized CHE time fraction of HTDs vs. number of devices.}\label{tauhch}
\endminipage
\vspace{-0.6 cm}
\end{figure}

Fig. \ref{taum2ch} shows the normalized CHE time fractions for type 2 MTDs, as the number of devices varies. When $\mu=2\tau_{0,LB}$ and when the number of devices is less than or equal to $8,000$,  each type 2 MTD transmits with its optimal time fraction which is $0.8853 \times 10^{-5}$. In this case, the sum of CHE time fractions of all devices is less than one as shown in Fig. \ref{sumtau}. However, since the the sum of CHE time fractions is greater than one for higher number of devices, the normalized CHE time fraction of each type $2$ MTD decreases until it reaches $0.7691 \times 10^{-5}$ when the number of devices is $10,000$.
Fig. \ref{taum2ch} also shows that when the number of devices is greater than $8,000$, the time fraction allocated for type $2$ MTDs  decreases with $\mu$ due to the increase in the sum of CHE time fractions. For a small network size, the CHE time fraction is the same for both values of $\mu$ since the sum of CHE time fractions is less than one. 

Fig. \ref{tauhch} shows the normalized CHE time fractions for HTDs as the number of devices varies.  When the number of devices is less than or equal to $8,000$ and $\mu=2\tau_{0,LB}$,  each HTD transmits with its highest possible time fraction which is $0.3333 \times 10^{-3}$. For a higher number of devices, the sum of CHE time fractions increases beyond one, and the normalized CHE time fraction for each HTD decreases until it reaches $0.2714 \times 10^{-3}$. Also, the normalized CHE time fraction of each HTD decreases with the mean $\mu$ when the number of devices is greater than $8,000$, since the sum of CHE time fractions increases with $\mu$ and is greater than one.
\vspace{-0.4 cm}
\subsection{CHE Performance Evaluation}

The performance of the CHE solution is next assessed in terms of the average total utilities of MTDs and HTDs, and the average percentage of devices with satisified QoS. 
Also, we assess the average performance of the GNE solution. The state-of-the-art GNE solution is used as a game-theoretic baseline since the GNE assumes that the players are fully rational, while the CHE solution assumes that players belong to discrete levels of bounded rationality. Further, for the considered simulation values, the GNE is unique, and, thus, performance of the GNE solution serves as a bound on the performance of the CHE solution.
This is achieved by generating 1,000 random initial vectors of the devices' time fractions. Then, for each generated random vector,  the total utility and the percentage of devices with satisfied QoS of the resulting GNE are computed. Then, the average total utility and the average percentage of devices are computed. We also consider, a non game-theoretic baseline, the ``equal time policy'' that splits the time duration $T$ equally among the IoT devices.

\begin{figure}[t]
	\centering
	\includegraphics[width=7.5 cm,height=4.5 cm,angle=0]{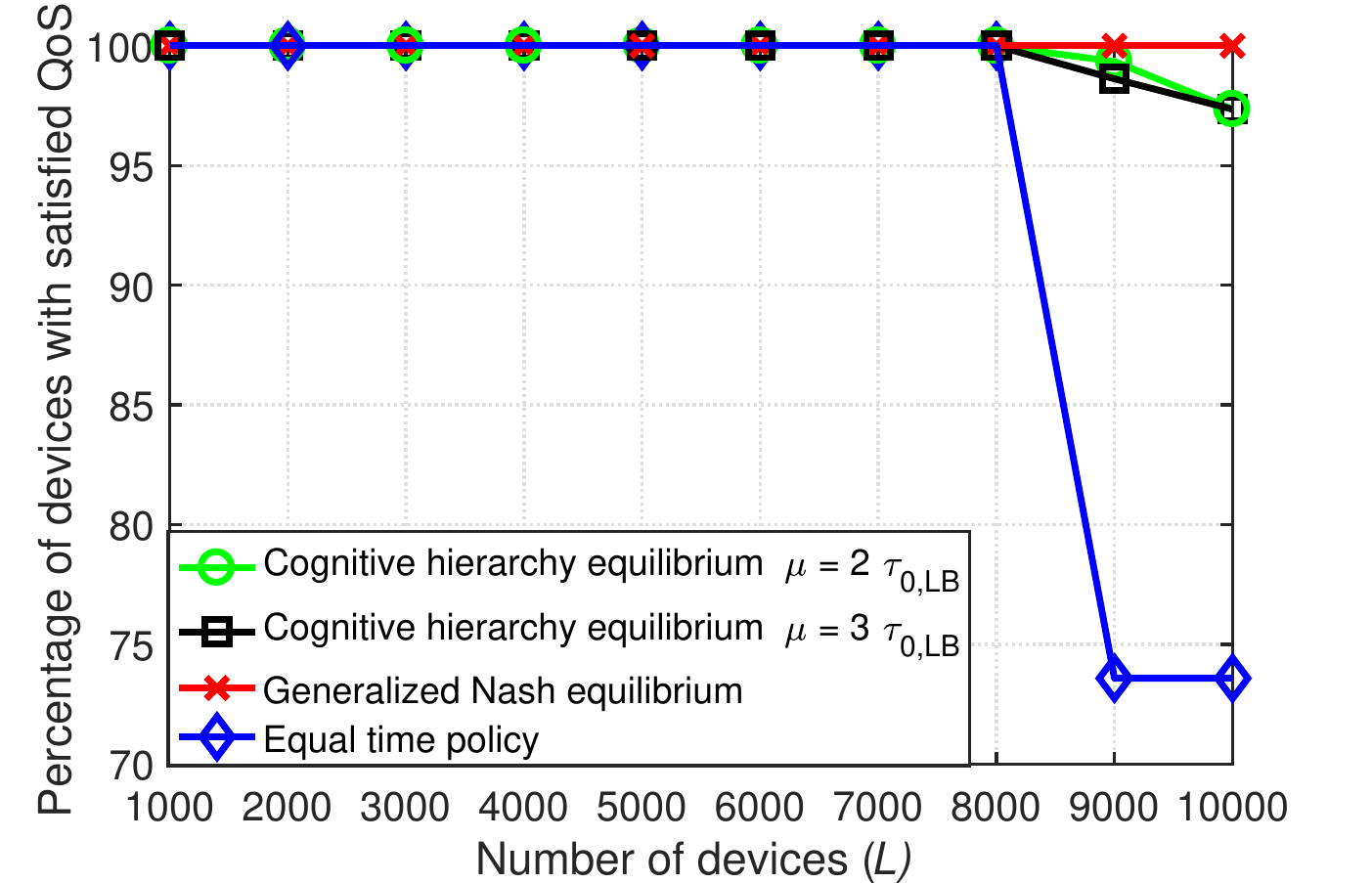}
	\caption{Percentage of devices with satisfied QoS vs. number of devices for different values of the mean of the distribution of  the level-0 devices' time fractions.
	}\vspace{-0.8 cm}\label{pernash}
\end{figure}

Fig. \ref{pernash} shows the percentage of devices with satisfied QoS constraints resulting from the CHE, the average GNE, and the equal time policy as a function of the number of devices. For the GNE, the percentage is maintained at $100\%$ for all network sizes. For the  CHE and when $\mu=2\tau_{0,LB}$, Fig. \ref{pernash} shows that the percentage of devices with satisfied QoS constraints is $100\%$ when the number of devices less than or equal to $8,000$. When the number of devices is greater than $8,000$, the percentage of of devices with satisfied QoS constraints decreases slightly until it reaches $97.35\%$. This decrease is mainly due to the fact that the normalized time fractions of some of the CH level-0 devices drops below the lower bound, whereas the time fraction of type 2 MTDs and HTDs are above the lower bound as shown in Figs. 4 and 5 in the revised manuscript. When $\mu=3\tau_{0,LB}$, the CHE maintains the same percentage of devices with satisfied QoS constraints as the case when  $\mu=2\tau_{0,LB}$ for network sizes less than or equal $8,000$. Then, the percentage of devices with satisfied QoS drops to $96\%$ as the network size increases to $10,000$.
As for the equal time policy, from Fig.  \ref{pernash}, we can see that the percentage of devices with satisfied QoS constraints is $100\%$ for a network size less than $8,000$. When the number of devices is greater than $8,000$, the percentage of devices with satisfied QoS constraints decreases to $73\%$. This is because, in this case, the time fraction assigned to each device drops below the lower bound on the time fraction of each HTD. Thus, Fig.  \ref{pernash} shows that CHE can maintain stable performance, in terms of the percentage of devices with satisified QoS, as good as the GNE solution and always outperforming the equal policy solution.


\begin{figure}[t]
\minipage{0.45\textwidth}
  \includegraphics[width=8 cm,height=4.5cm,angle=0]{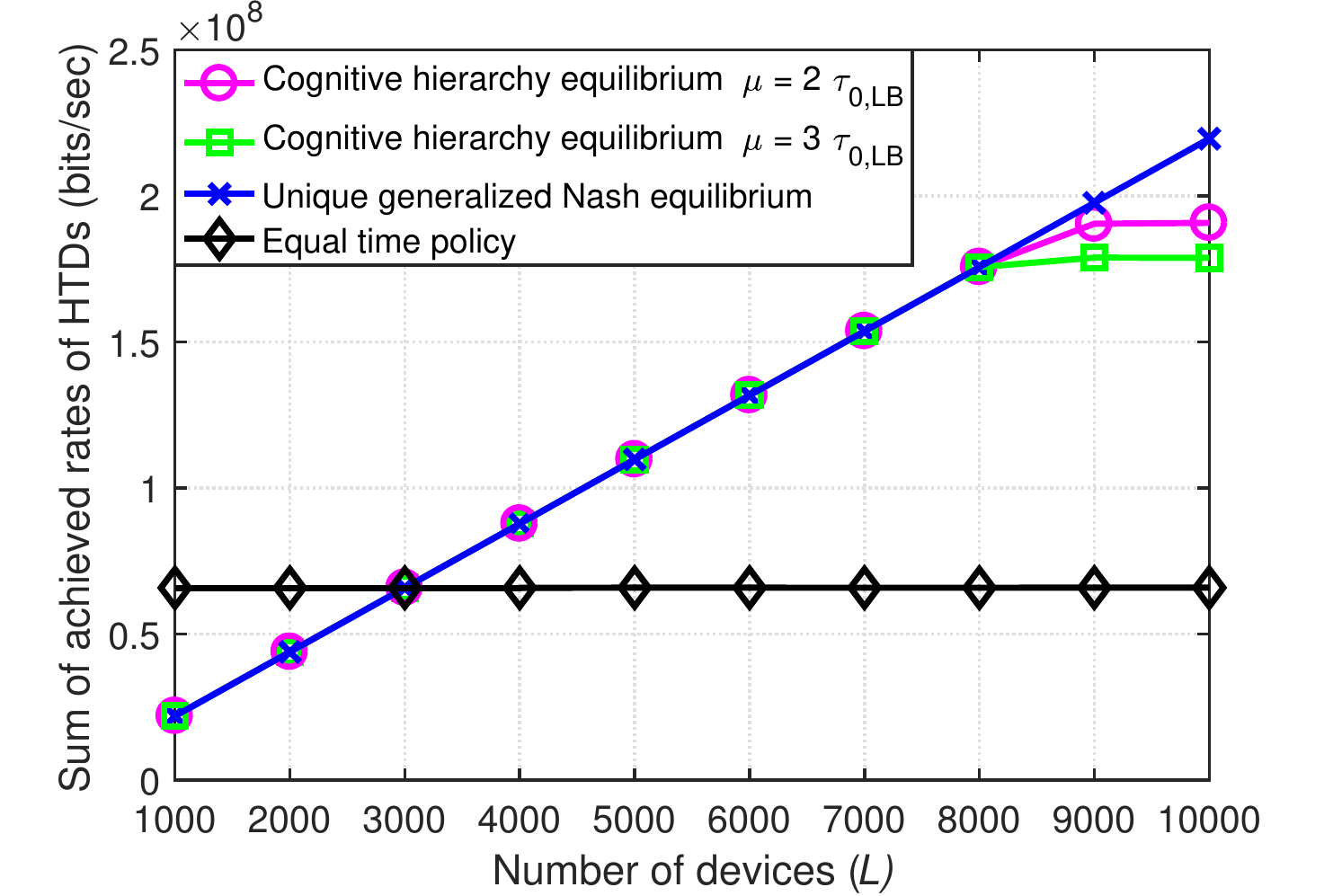}
  \caption{Total rate of HTDs vs. number of devices.}\label{avgchh2}
\endminipage \hspace{1 cm} \vspace{0.3 cm}
\minipage{0.45\textwidth}
  \includegraphics[width=8 cm,height=4.5cm,angle=0]{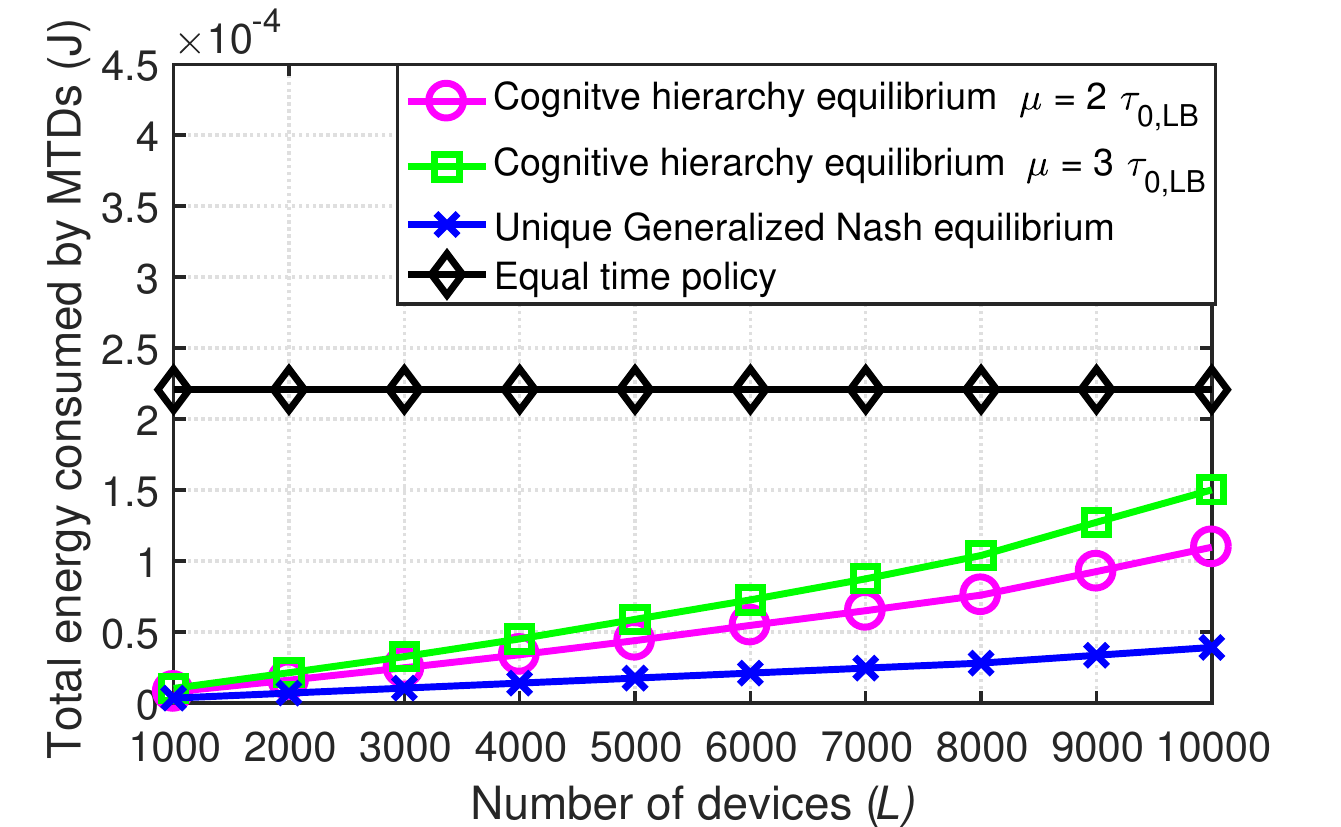}
  \caption{Total energy consumed by MTDs vs. number of devices.}\label{avgchm2}
\endminipage
\vspace{-0.8 cm}
\end{figure}

%
Next, we compute the minimum, maximum, and average total rate of HTDs and total energy of MTDs achieved by the GNE solution, the average total rate of HTDs and total energy of MTDs achieved by the CHE solution, and the total rate of HTDs and the total energy of MTDs of the equal time policy. Fig. \ref{avgchh2} shows the total achieved rate of HTDs versus the network size for the considered value of the mean $\mu$. 
For all network sizes, the GNE solution is unique and thus the minimum, maximum, and average total rates of HTDs achieved at the GNE are equal and increasing with the network size. 
For the CHE solution, when $\mu=2\tau_{0,LB}$, the total rate of the CHE solution is the same as the total rate of the GNE solution for network sizes less than $8,000$. This is because the CHE time fraction of each HTD is the optimal solution, and the sum of CHE time fractions is less than 1 as shown in Figs. \ref{sumtau} and \ref{tauhch}. For network sizes larger than $8,000$, the total rate of the CHE becomes less than total rate of the GNE solution since the CHE time fractions of all devices decreases due to normalization.
When $\mu$ is increased to $3\tau_{0,LB}$, the total rate achieved by CHE decreases for network sizes greater than $8,000$ since the sum of CHE time fractions increases and is greater than one.
For the equal time policy and for all considered network sizes, the total rate stays fixed at $65.832$ Mbits/sec. This is because the time fraction assigned to each device decreases with the network size and becomes less than the HTD optimal value for the considered network sizes. Also, the value of the total rate is fixed since we are considering HTDs of the same type. 
Thus, Fig. \ref{avgchh2} shows that the CHE solution maintains the same performance as the GNE for network sizes less than 8,000. For network sizes larger than $8,000$, the degradation of the total rate using the CHE solution, compared to the GNE solution, is only around $11\%$. Also, the CHE solution can be bring a two-fold increase in the total rate of HTDs compared to the equal time policy.

Fig. \ref{avgchm2} shows the total energy consumed as a function of the network size. Here, we note that, for the considered networks, the GNE is unique and the total energy consumed by MTDs at this GNE is increasing with the network size. 
For the CHE, when $\mu=2\tau_{0,LB}$, the average total energy consumed by MTDs is higher than the total energy of the GNE solution for all considered network sizes. This is due to the fact that, due to their limited capabilities,  level-0 devices choose their time fraction randomly while in the GNE solution level-0 devices choose the optimal time fraction. When $\mu$ is increased to $3\tau_{0,LB}$, the total energy consumed increases for all considered network sizes since level-0 devices transmit with higher time fractions. 
For the equal time policy, the total energy consumed by MTDs stays fixed at $0.2208$ mJ.
Fig. \ref{avgchm2} shows that the energy consumed by the CHE solution, when $\mu=2\tau_{0,LB}$, is reduced by around $78\%$ compared to the equal time policy.

In order to assess the efficiency of the CHE solution, we define a performance metric similar to the PoA, but tailored to the CH approach. We call this metric the \emph{price of bounded rationality (PoB)}. 
\vspace{-0.5 cm}
\subsection{Price of Bounded Rationality}

The PoB is defined as the ratio of the optimal total utility and the total utility achieved using the CHE strategy. Due to the different performance metrics between MTDs and HTDs and similar to the case of PoA, we define a PoB separate for MTDs and another PoB for HTDs. The PoB of MTDs is defined as the ratio of the total energy consumed by MTDs using the CHE solution and the minimum total energy consumed by MTDs. The PoB of HTDs, on the other hand, is defined as the ratio of the maximum total rate of HTDs and the total rate of HTDs achieved by the CHE solution.
\begin{figure}[t]
\minipage{0.45\textwidth}
  \includegraphics[width=7.5 cm,height=4.5cm,angle=0]{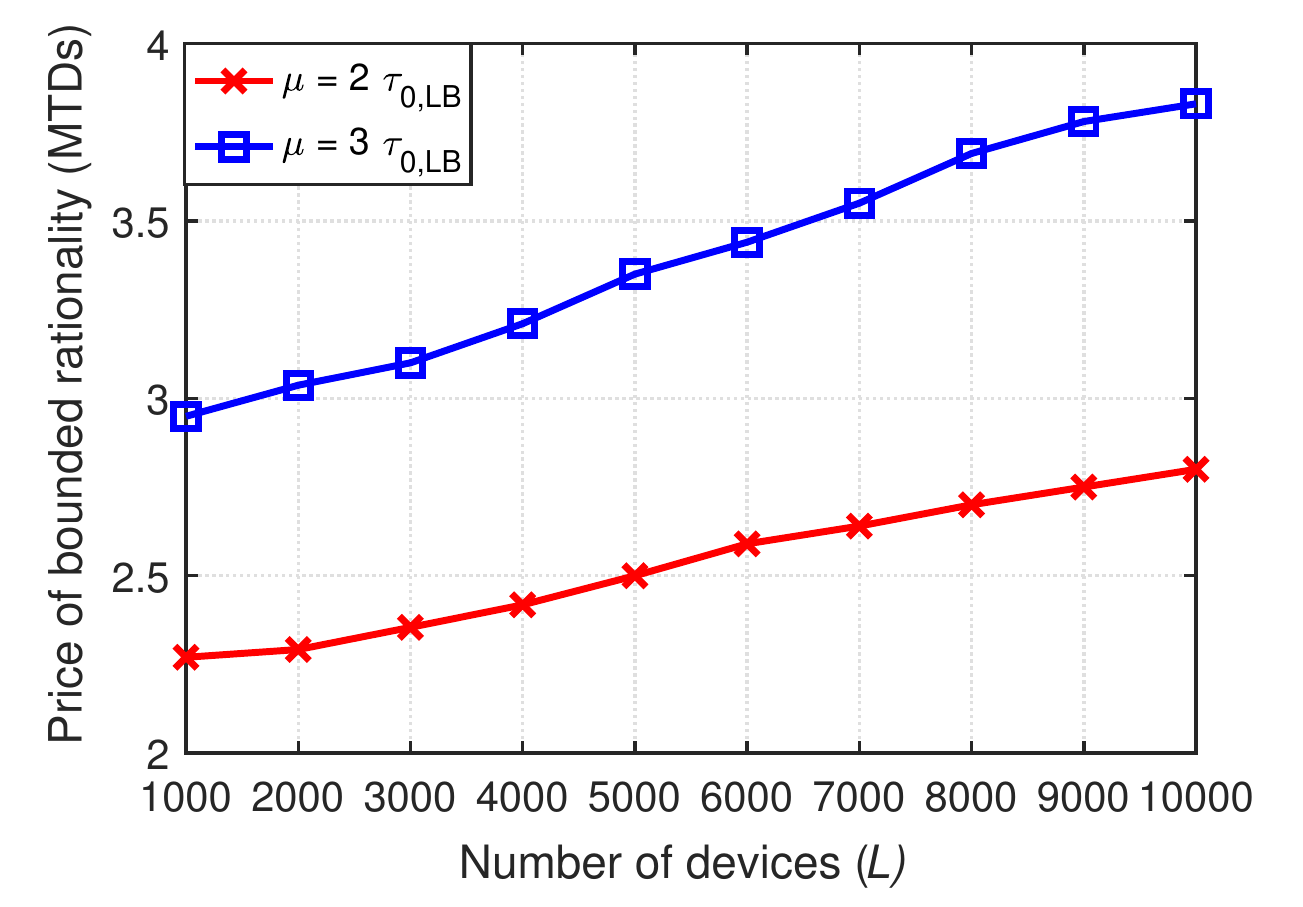}
  \caption{Price of bounded rationality of MTDs vs. number of devices.}\label{PoBm}
\endminipage \hspace{1 cm}
\minipage{0.45\textwidth}
  \includegraphics[width=7.5 cm,height=4.5cm,angle=0]{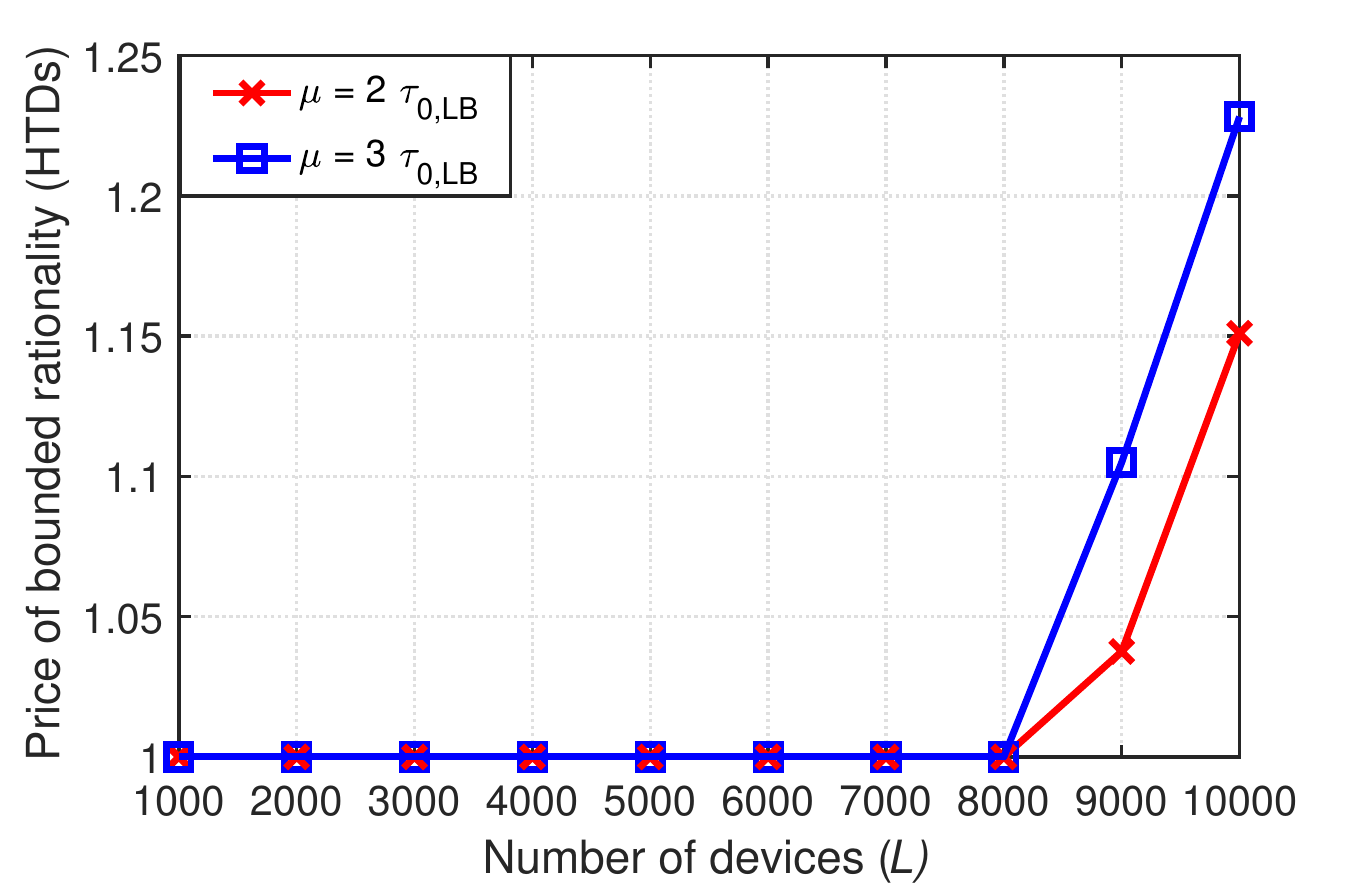}
  \caption{Price of bounded rationality of HTDs vs. number of devices.}\label{PoBh}
\endminipage
\vspace{-0.8 cm}
\end{figure}
Fig. \ref{PoBm} shows the PoB of MTDs versus the IoT network size for the considered values of $\mu$. From this figure, we can see that when $\mu=2\tau_{0,LB}$, the PoB of MTDs is around 2.27 when the network size is 1,000. Then, the PoB increases with the network size until it reaches $2.81$ when the network size is $10,000$. This is due to the fact that the number of level-0 devices, that choose their time fraction randomly, increases with the network size which results in a higher energy consumption compared to the GNE solution.
When $\mu$ increases to $3\tau_{0,LB}$,  the PoB of MTDs increases for each considered network size. This increase is mainly due  to the fact that as $\mu$ increases, the probability that a CH level-0 device transmits with higher time fractions increases, which increases the total energy consumed. Fig. \ref{PoBh} also shows that the PoB of MTDs increases at a low rate with the network size. Thus, the proposed CH approach can clearly maintain a stable performance of MTDs for larger network sizes.

Fig. \ref{PoBh} shows the PoB of HTDs versus the IoT network size for the considered values of the mean $\mu$. When the network size is less than 8,000 and when $\mu=2\tau_{0,LB}$, the PoB of HTDs is $1$. This is because the CHE time fraction of each HTD is the time fraction that maximizes its utility. As the network size increases beyond 8,000, the PoB of HTDs increases until it reaches $1.15$ when the network size is $8,000$. This increase in PoB is due to decrease in the normalized CHE time fraction of each HTD as shown in Fig. 5 in the revised manuscript. When $\mu$ increases to $3\tau_{0,LB}$, the PoB of HTDs increases for network sizes larger than $8,000$ due to the decrease in the normalized CHE time fraction according to Fig. 5 in the revised manuscript. Fig. \ref{PoBh} shows that the PoB of HTDs increases at a low rate with the network size. Thus, the proposed CH approach can maintain stable performance of HTDs for larger network sizes.

\vspace{-0.5 cm}
\section{Conclusion}
In this paper, we have considered the problem of distributed uplink time allocation in an IoT network where the IoT devices have heterogeneous quality-of-service requirements. We have formulated the problem as a noncooperative game that takes into account the heterogeneous requirements of the IoT devices. In this game, the players are the IoT devices, and their actions are to choose the time fractions necessary to ensure their quality-of-service requirements. In the proposed game, the strategy set of each device is dependent on the actions taken by the other devices. Hence, we have first characterized the set of GNEs. Moreover, we have proposed an algorithm to find the GNE of the devices and shown that the computational complexity is polynomial in the number of devices. Then, we have proposed a novel solution using CH theory to take into account the heterogeneous computational capabilities of the IoT devices. Thus, the proposed CH solution provides a more realistic solution than the GNE that assumes that all players have the same capabilities. We have characterized the cognitive hierarchy equilibrium and compared it analytically to the GNE. Extensive simulations have been conducted to thoroughly assess the various performance tradeoffs of the proposed approach. Finally, we note that, beyond the IoT application treated here, the proposed cognitive hierarchy framework can be generalized to any wireless network in which heterogeneity and bounded rationality are key features.


%
%

\vspace{-0.3 cm}

\def\baselinestretch{0.87}

\vspace{-0.4 cm}
\appendices

\section{}
\subsection{Proof of Lemma 1}
By applying the $\log$ function to the energy function $\chi_i(\gamma_i)= \frac{ P_i b_i}{ W\log(1+\gamma_i)e^{-\frac{\gamma_i \sigma^2}{\alpha_i^2P_i}}}$ we have
	$\log \chi_i(\gamma_i)=\log P_ib_i-\log (W)-\log\log(1+\gamma_i)+ \frac{\gamma_i \sigma^2}{\alpha_i^2P_i}$. The term $-\log\log(1+\gamma_i)$ is convex in $\gamma_i$ since $\log \log(1+\gamma_i)$ is concave in $\gamma_i$. Also, the term $\frac{\gamma_i \sigma^2}{\alpha_i^2P_i}$ is linear in $\gamma_i$. Hence, the function $\log \chi_i(\gamma_i)$ is convex in $\gamma_i$. It follows that the energy function $\chi_i(\gamma_i)$ is convex in $\gamma_i$ as it logconvex in $\gamma_i$. Then, the utility of MTD $i$ is concave since $U_i(\gamma_i)=-\chi_i(\gamma_i)$.

\vspace{-0.2 cm}
\subsection{Proof of Proposition 1}
For each HTD $i$,  the utility function is linear in its action $a_i=\tau_i$ hence it is also concave. For each MTD $i$, it is shown in Lemma 1 that the utility is a concave in its action $a_i=\gamma_i$. Also, for each device $i$ the strategy space $\mathcal{S}_i(\boldsymbol{a}^*_{-i})$ is nonempty, closed and convex. The result follows directly from \cite[Theorem 4.1]{gnep}.
\vspace{-0.2 cm}
\subsection{Proof of Proposition 2}
The utility of HTD $i$ is an increasing linear function of $\tau_i$. Thus, the optimal value of $\tau_i$ is its upper bound. From the constraints of (\ref{HTDopt}), we get (\ref{brh1}). 
\vspace{-0.2 cm}
\subsection{Proof of Proposition 3}
As shown in Lemmas 1 and 2, the utility of MTD $i$ is concave and attains its maximum at $\gamma'_i$. Hence, given any strategy vector $\boldsymbol{a}_{-i}$, MTD $i$ chooses $a_i=\gamma'_i$ if $\gamma'_i \in \mathcal{S}_i(\boldsymbol{a}_{-i})$. Otherwise, MTD $i$ chooses the upper bound of its strategy set $a_i=-\frac{\alpha_i^2 P_i\log(1-\sqrt[t_i]{\epsilon})}{\sigma^2}$ if the upper bound is less than $\gamma'_i$ since the utility is increasing over the strategy set in this case. The last case is when the lower bound is greater than $\gamma'_i$, MTD $i$ chooses the lower bound since the utility function is decreasing over the strategy set.
\vspace{-0.2 cm}
\subsection{Proof of Proposition 4}
Since MTD $i$ utility is concave in $\gamma_i$, the result follows using a similar argument as the proof of Proposition \ref{propGNEMTD}. For HTD $i$, the utility function is increasing in its strategy $a_i$. Thus, the optimal value is the upper bound of the strategy space of HTD $i$.
\vspace{-0.4 cm}
\section{}
\subsection{Proof of Theorem \ref{GNEthm}}
For the first case i.e. when $\sum_{i \in \mathcal{M}'}\frac{b_i}{T\cdot W \log(1+\gamma'_i)}+\sum_{i \in \mathcal{H}}\frac{E_i}{ T\cdot WP_i}+ \sum_{i \in \mathcal{M}-\mathcal{M}'}\frac{b_i}{T\cdot W \log(1+\gamma_{i,\textrm{UB}})} \leq 1$,  it is clear according to the best response equations in propositions (\ref{propGNEHTD}) and (\ref{propGNEMTD}) that for the action profile $a^{**}_i=\gamma'_i$ $\forall i \in \mathcal{M}'$ and $a^{**}_i=-\frac{\alpha_i^2 P_i\log(1-\sqrt[t_i]{\epsilon})}{\sigma^2}$ $\forall i \in \mathcal{M}-\mathcal{M}'$, and $a^{**}_i=\frac{E}{ T\cdot WP_i}$ for each HTD $i$, no device has the incentive to change its strategy since $a^{**}_i$ is the optimal solution of the utility of device $i$. Next, we show that there exists no other action profile that constitutes a GNE. For any action profile $\boldsymbol{a}'$ other than $\boldsymbol{a}^{**}$, a device $i \in \mathcal{M}'$ will always change its strategy to the optimal strategy $\gamma'_i$ if $a'_i <\gamma'_i$ since  $\gamma'_i$ is the optimal solution of its utility and it yields a lower time fraction. Also, any device $i \in \mathcal{M}- \mathcal{M}'$ has an incentive to change its strategy to the optimal strategy $\gamma_{i,\textrm{UB}}$ if $a'_i<\gamma_{i,\textrm{UB}}$ resulting in a lower time fraction. Similarly, any HTD $i$ will reduce its strategy to $\frac{E_i}{ T\cdot WP_i}$ if $a'_i>\frac{E_i}{ T\cdot WP_i}$ since $\frac{E_i}{ T\cdot WP_i}$ is the upper bound. Let $\mathcal{D}$ be the set of all such devices. For the newly formed action profile $\boldsymbol{a}''$ s.t. $a''_i=a^{**}_i$ if $i \in \mathcal{D}$ and $a''_i=a'_i$ if $i \in \mathcal{L} - \mathcal{D}$ , $\sum_{i \in  \mathcal{D} \cap \mathcal{M}}\frac{b_i}{T\cdot W \log(1+a^{**}_i)}+\sum_{i \in \mathcal{D} \cap \mathcal{H}}\frac{E_i}{ T\cdot WP_i}+ \sum_{i \in (\mathcal{L} - \mathcal{D}) \cap \mathcal{M}}\frac{b_i}{T\cdot W \log(1+a'_i)}+ \sum_{i \in  (\mathcal{L} -\mathcal{D}) \cap \mathcal{H}}a'_i<\sum_{i \in \mathcal{M}'}\frac{b_i}{T\cdot W \log(1+\gamma'_i)}+\sum_{i \in \mathcal{H}}\frac{E_i}{ T\cdot WP_i}+ \sum_{i \in \mathcal{M}-\mathcal{M}'}\frac{b_i}{T\cdot W \log(1+\gamma_{i,\textrm{UB}})} \leq 1$ since $U(a'_i)<U(a^{**}_i)$ for $a'_i > a^{**}_i$ for each MTD in $\mathcal{L}-\mathcal{D}$ and $a'_i < a^{**}_i$ for each HTD in $\mathcal{L}-\mathcal{D}$, and, hence, each device in $\mathcal{L}-\mathcal{D}$ has an incentive to change its action to $a^{**}_i$.
	
	For the second case i.e. $\sum_{i \in \mathcal{M}'}\frac{b_i}{T\cdot W \log(1+\gamma'_i)}+\sum_{i \in \mathcal{H}}\frac{E_i}{ T\cdot WP_i}+ \sum_{i \in \mathcal{M}-\mathcal{M}'}\frac{b_i}{T\cdot W \log(1+\gamma_{i,\textrm{UB}})} > 1$ 
	for any action profile $\boldsymbol{a}' \notin \mathcal{N}$. Similar to the first case, any device $i \in \mathcal{D}$ ($\mathcal{D}$ being same set previously defined) has an incentive to change its strategy to $\gamma'_i$ if $i \in \mathcal{D}\cap \mathcal{M}'$ and $\gamma_{i,\textrm{UB}}$ if $i \in \mathcal{D}\cap (\mathcal{M}-\mathcal{M}')$ and to $\frac{E_i}{ T\cdot WP_i}$ if $i \in \mathcal{D} \cap \mathcal{H}$. For the newly formed action profile $\boldsymbol{a}''$ s.t. $a''_i=a^{**}_i$ if $i \in \mathcal{D}$ and $a''_i=a'_i$ if $i \in \mathcal{L} - \mathcal{D}$. It can be easily verified that each action $a^* \in \mathcal{N}$ is a GNE. No device in $\mathcal{A}_j$ has any incentive to change its strategy since $a^*_i$ is the optimal value of its utility and all other devices in $\mathcal{L}-\mathcal{A}_j$ cannot improve their utility since the sum of time fractions corresponding to $a^*$ is one and an improvement in their utility will need a higher allocated time fraction.
\vspace{-0.3 cm}
\subsection{Proof of Theorem 2}
First, we show that Algorithm 1 converges in at most three iterations. Without loss of generality, we assume that the initial sum of time fractions allocated to the devices is one. In the first iteration of the algorithm, all MTDs that can improve their utility by reducing their current time fraction will change their strategy. If there are no such MTDs that can improve their utility in the first round, the algorithm stops since no device can change its strategy. The worst-case scenario occurs when during the first round the last MTD that changes its strategy by reducing its time fraction is the last in the order. This is because by the end of the first round, the sum of time fractions of all devices is less than one. Hence, in the second round, it is only possible for the devices to improve their utilities by choosing a higher time fraction. The algorithm terminates in the third round because no other device can further improve its utility. 
\vspace{-0.2 cm}
\subsection{Proof of Corollary 1}
Based on Theorem \ref{complexityGNE}, the complexity of the computations done at each device, to find the GNE, is $O(L)$ since Algorithm 1 converges in at most three iterations. Each device solves at most two best response optimization problems to converge to its GNE strategy. The complexity of finding the best response based on Propositions \ref{propGNEHTD} and \ref{propGNEMTD} is $O(L)$.
\vspace{-0.3 cm}
\subsection{Proof of Theorem \ref{propCHcomplexity}}
To find the CHE, a device at level $k$ needs to evaluate CHE strategies of all devices assuming that they are level $1$ to $k$ based on its beliefs. Based on Propositions 4 and 5, finding the strategy of each device takes $O(1)$ times. Hence, to find the strategy of all devices of each level takes $O(C)$ times. Thus, to find the strategy of devices for all levels up to $k$ takes $O((k+1)\cdot C)$.
\vspace{-0.3 cm}
\subsection{Proof of Theorem \ref{theoremcompare}}
For the first case, we start by investigating the strategy profile $\boldsymbol{d}$. Since level-0 MTDs choose their time fractions randomly, the CHE strategy of level-0 MTD $i$ could possibly be $\gamma_{\nu_i}<\min\{\gamma'_i,\gamma_{i,UB}\}$. In this case, the normalized time fraction of level-0 device will be greater than the time fraction that maximizes its utility. Hence, level-0 MTD $i$ has the incentive to change its strategy to $\min\{\gamma'_i,\gamma_{i,UB}\}$, and, hence, the CHE solution will not be a GNE. In the second case when $\gamma_{\nu_i}>\min\{\gamma'_i,\gamma_{i,UB}\}$, the normalized time fraction of level-0 MTD $i$ will be less than the optimal time fraction. In this case, level-0 MTD $i$ cannot increase its time fraction since the sum of the CHE time fractions is one.
Now, for any MTD $i$ belonging to a level higher than 0, if $\gamma'_i\geq \gamma_{i,\textrm{UB}}$, the CHE strategy of MTD $i$ will be $\gamma_{i,\textrm{UB}}$ and  $\gamma_{\nu_i}$ will be greater than or equal to $\gamma_{i,\textrm{UB}}$. In this case, the normalized time fraction $\nu_i$ will be lower than the lower bound of the time fraction (which is the value of the time fraction that corresponds to $\gamma_{i,\textrm{UB}}$). Hence, MTD $i$ has no incentive to further reduce its allocated time fraction.  For the case in which $e^{\frac{b_k}{\frac{T\cdot W}{g_k(k)} \Big(1-\sum_{h=0}^{k-1}g_k(h)\sum_{q \in \mathcal{C}_k}N_q\frac{b_h}{T\cdot W\log(1+\gamma^*_q(h))}\Big)}}-1\leq \gamma'_i \leq \gamma_{i,\textrm{UB}}$, the CHE strategy of MTD $i$ is $\gamma'_i$ and hence $\gamma_{\nu_i}$ will be greater than or equal to $\gamma'_i$. Hence, MTD $i$ has no incentive  to increase its $\gamma_i$ since this will result in decreasing its utility. Also, MTD $i$ can not increase its time fraction by reducing its $\gamma_i$ since the sum of the normalized time fractions is one. When $e^{\frac{b_k}{\frac{T\cdot W}{g_k(k) \cdot L} \Big(1-\sum_{h=0}^{k-1}g_k(h)\sum_{q \in \mathcal{C}_k}N_q\frac{b_h}{T\cdot W\log(1+\gamma^*_q(h))}\Big)}}-1>\gamma'_i$, $\gamma_{\nu_i}$ will be greater than $\gamma'_i$ using an argument similar to the previous case, hence, MTD $i$ has no incentive to change its strategy $d_i$. Any HTD $i$ cannot increase its allocated time fraction $\nu_i$ since the sum of the normalized time fractions is one.

For the second case, if  $\sum_{i \in \mathcal{M}'}\frac{b_i}{T\cdot W \log(1+\gamma'_i)}+\sum_{i \in \mathcal{H}}\frac{E}{ T\cdot WP_i}+ \sum_{i \in \mathcal{M}-\mathcal{M}'}\frac{b_i}{T\cdot W \log(1+\gamma_{i,\textrm{UB}})} > 1$, we know from Theorem \ref{GNEthm} that the GNE is not unique and that the sum of time fractions is one. Hence, the CHE strategy is not a GNE. Thus, for the CHE strategy $\boldsymbol{d}$, there exists a subset of devices that can improve their utility until the sum of time fractions is one.
Otherwise when
$\sum_{i \in \mathcal{M}'}\frac{b_i}{T\cdot W \log(1+\gamma'_i)}+\sum_{i \in \mathcal{H}}\frac{E}{ T\cdot WP_i}+ \sum_{i \in \mathcal{M}-\mathcal{M}'}\frac{b_i}{T\cdot W \log(1+\gamma_{i,\textrm{UB}})} \leq 1$ , we know from Theorem \ref{GNEthm} that the GNE is unique and hence based on Theorem \ref{GNEthm} the CHE strategy is a GNE according to Theorem \ref{GNEthm} if the CHE strategy is for any MTD $i$ $d_i(k_i)=\gamma'_{i}$ if $i \in \mathcal{M}'$ and $d_i(k_i)=\gamma_{i,\textrm{UB}}$ if $i \in \mathcal{M}-\mathcal{M'}$. Otherwise, any other CHE strategy profile $\boldsymbol{d}$ will have a lower performance in terms of the total utility than that of the GNE since every action $a^{**}_i$ in the GNE strategy profile $\boldsymbol{a}^{**}$ maximizes device $i$ utility. Here, we also recall that the utility function of each device is independent of the actions of the remaining devices. Thus, the performance of the CHE, due to bounded rationality, is upper bounded by the GNE.


\begin{thebibliography}{1}
	
\bibitem{isit2016}N. Abuzainab, W. Saad, and H. V. Poor, ``Cognitive hierarchy theory for heterogeneous
uplink multiple access in the Internet of Things,'' in \emph{Proc. IEEE International Symposium on Information Theory (ISIT)}, Barcelona, Spain, July 2016, pp.~ 1252-1256.


\bibitem{iott} L. Atzoria, A. Ierab, and G. Morabitoc, ``The Internet of Things: A survey,'' \emph{Computer Networks}, vol.\ 54, no.\ 15, pp.~ 2787-2805, Oct. 2010. 

\bibitem{niyato}D. Niyato, X. Lu, P. Wang, D. I. Kim, and Z. Han, ``Economics of Internet of Things: An information market approach,'' \emph{IEEE Wireless Communications}, vol.\ 23, no.\ 4, pp.~ 136-145, Aug. 2016.


\bibitem{m2msurvey}Z. Dawy, W. Saad, A. Ghosh, J. G. Andrews, and E. Yaacoub, “Towards massive machine type cellular communications,” \emph{IEEE Wireless Commununications Magazine}, to appear, 2016.

\bibitem{iotson}A. P. Athreya and P. Tague, ``Network self-organization in the Internet of Things,'' in \emph{Proc. IEEE International Conference on Sensing, Communications and Networking (SECON)}, New Orleans, LA, USA, June, 2013, pp. 25-33.

\bibitem{popovski} C. Bockelmann, N. Pratas, H. Nikopour, K. Au, T. Svensson, C. Stefanovic, P. Popovski, and A. Dekorsy, ``Massive machine-type communications in 5G: Physical and MAC-layer solutions,'' \emph{IEEE Communications Magazine}, vol.\ 54, no.\ 9, pp.~59 - 65, Sep. 2016.

\bibitem{tai} T. Park and W. Saad, ``Learning with finite memory for machine type communication,'' in \emph{Proc. Annual Conference on Information Science and Systems (CISS)}, Princeton, NJ, Mar. 2016, pp.~ 608-613.

\bibitem{iotn1}K. Edemacu and T. Bulega, ``Resource sharing between M2M and H2H traffic under time-controlled scheduling scheme in LTE networks,'' in \emph{Proc. International Conference on Telecommunication Systems Services and Applications (TSSA)},  Kuta, Indonesia, Oct. 2014.

\bibitem{iotn2}S. Hamdoun, A. Rachedi, and Y. Ghamri-Doudane, ``Radio resource sharing for MTC in LTE-A: an interference-aware bipartite graph approach,'' in \emph{Proc. IEEE Global Communications Conference (GLOBECOM)}, San Diego, CA, Dec. 2015.

\bibitem{iotn4}  S. Hamdoun, A. Rachedi, and Y. Ghamri-Doudane, "A flexible M2M radio resource sharing scheme in LTE networks within an H2H/M2M coexistence scenario," in \emph{Proc. IEEE International Conference on Communications (ICC)}, Kuala Lumpur, Malaysia, May 2016.


\bibitem{iotn3}S. Zhenqi, Y. Haifeng, C. Xuefen, and L. Hongxia, ``Research on uplink scheduling algorithm of massive M2M and H2H services in LTE,'' in \emph{Proc. IET International Conference on Information and Communications Technologies (IETICT)}, Beijing, China, Apr. 2013.








\bibitem {iot4} H. S. Dhillon, H. C. Huang, H. Viswanathan, and R. A. Valenzuela, ``Fundamentals of throughput maximization with random arrivals for M2M communications,'' \emph{IEEE Transactions on Communications}, vol.\ 62, no.\ 11, pp.~ 4094-4109, Nov. 2014.

\bibitem{iot2}G. Wang, J. Leng, and L. Bai, ``A game-theoretic analysis of multiple protocol data flows in hierarchical M2M communication networks,'' \emph{International Journal of Distributed Sensor Networks}, vol.\ 9, no.\ 11, Oct. 2013.

\bibitem{iotn5} S. Hamdoun, A. Rachedi, H. Tembine, and Y. Ghamri-Doudane, ``Efficient transmission strategy selection algorithm for M2M communications: An evolutionary game approach,"in \emph{Proc. IEEE  International Symposium on Network Computing and Applications (NCA)}, Cambridge, MA, Oct-Nov 2016, pp. 286-293.

\bibitem{iotn6}S. Maghsudi and E. Hossain, ``Distributed user association in energy harvesting dense small cell networks: a mean-field multi-armed bandit approach," \emph{IEEE Access}, vol.\ 5, pp.~ 3513-3523, Mar. 2017.

\bibitem{iot3} D. Liu, Y. Chen , K. K. Chai, T. Zhang, and M. Elkashlan, ``Opportunistic user association for multi-service HetNets using Nash bargaining solution,'' \emph{IEEE Communications Letters}, vol.\ 18, no.\ 3, pp.~ 463-466, Apr. 2014.



\bibitem{ram1}M. Hasan, E. Hossain, and D. Niyato, ``Random access for machine-to-machine communication in LTE-advanced networks: Issues and approaches,'' \emph{IEEE Communications Magazine}, vol.\ 51, no.\ 6, pp.~ 86-93, June 2013.


\bibitem{ramcollisions}A. Laya, L. Alonso, P. Chatzimisios, and J. Alonso-Zarate, ``Massive access in the random access channel of LTE for M2M communications: An energy perspective,'' in \emph{Proc. IEEE International Conference on Communication Workshop (ICCW)}, London, UK, June 2015, pp.~ 1452-1457.


%
%
%
%
%
%
%
%
%
%
%
%





\bibitem{Camerer_acognitive} C. F. Camerer and J. Chong, ``A cognitive hierarchy model of games,'' \emph{Quarterly Journal of Economics}, vol.\ 119, no.\ 3, pp.~ 861-898, 2004.
\bibitem{chanstate} H. Wu, X. Gao, X. Wang, and X. You, ``Sum-rate-optimal precoding for multi-cell large-scale MIMO uplink based on statistical CSI,'' in \emph{IEEE Transactions on Communications}, vol.\ 63, no.\ 8, pp.~ 2924-2935, Aug. 2015.

\bibitem{m2mtraffic}M. Laner, P. Svoboda, N. Nikaein, and M. Rupp, ``Traffic models for machine type communications,'' in \emph{Proc. IEEE International Symposium on Wireless Communication Systems (ISWCS)}, Ilmenau, Germany, Aug. 2013.

	
\bibitem{gnep} F. Facchinei and C. Kanzow, ``Generalized Nash equilibrium problems,'' \emph{4OR - A Quarterly Journal of Operations Research}, vol.\ 5, no.\ 3, pp.~ 173-210, Sep. 2007.


\bibitem{gametheory} Z. Han, D. Niyato, W. Saad, T. Ba\c{s}ar, and A. Hj{{\o{}}}rungnes, \emph{Game Theory in Wireless and Communication Networks: Theory, Models, and Applications}, Cambridge University Press, 2012.


\bibitem{CSMA} L. Angrisani, M. Bertocco, G. Gamba, and A. Sona, ``Modeling the Performance of CSMA-CA Based Wireless Networks Versus Interference Level," in \emph{Proc. IEEE Instrumentation and Measurement Technology Conference}, Victoria, Canada, May 2008, pp.~ 376-381.


\bibitem{IEEEpacket}IEEE 802 16p-11/0014. IEEE 802.16p Machine to Machine (M2M) Evaluation Methodology Document (EMD). IEEE 802.16 Broadband Wireless Access Working Group, 2010.

\bibitem{IEEEpacket2}IEEE 802 16p-10/0005. IEEE 802.16p Machine to Machine (M2M) Communication Technical Report. IEEE 802.16 Broadband Wireless Access Working Group, 2010.

\bibitem{m2mpacket} M. Popović, D. Drajić, and S. Krčo, ``The impact of HSPA core network features on latency for M2M and OG-like traffic patterns," in \emph{Proc. IEEE Telecommunications Forum Telfor (TELFOR)}, Belgrade, Serbia, Nov. 2013, pp.~ 291-294.

\bibitem{5Gpaper} 5G Empowering Vertical Industries (White Paper). 5G-PPP. Feb. 2016. Retrieved August 17, 2016.


%
%
%
%
%
%
%
%
%
%
%
%
%
%
%
%
\end{thebibliography}
\end{document}